\newcommand\numberthis{\addtocounter{equation}{1}\tag{\theequation}}
\newtheorem{lem}{Lemma}
\newtheorem{theo}{Theorem}
\newtheorem{cor}{Corollary}
\newtheorem{ex}{Example}
\newcommand{\algmargin}{\the\ALG@thistlm}
\newlength{\forwidth}
\algnewcommand{\parState}[1]{\State
  \parbox[t]{\dimexpr\linewidth-\algmargin}{\strut #1\strut}}
\newlength{\ifwidth}
\begin{document}

\title{On Computing the Multiplicity of Cycles in Bipartite Graphs Using the Degree Distribution and the Spectrum of the Graph \thanks{This work is accepted for presentation in part at ISTC 2018, Hong Kong.}}
\author{Ali Dehghan, and Amir H. Banihashemi,\IEEEmembership{ Senior Member, IEEE}}

\maketitle


\begin{abstract}
Counting short cycles in bipartite graphs is a fundamental problem of interest in the analysis and design of low-density parity-check (LDPC) codes. The vast majority of research in this area is focused on algorithmic techniques. Most recently, Blake and Lin proposed a computational technique to count the number of cycles of length $g$ in a bi-regular bipartite graph, where $g$ is the girth of the graph. The information required for the computation is the node degree and the multiplicity of the nodes on both sides of the partition, as well as the eigenvalues of the adjacency matrix of  the graph (graph spectrum). In this paper, the result of Blake and Lin is extended to compute 
the number of cycles of length $g+2, \ldots, 2g-2$, for bi-regular bipartite graphs, as well as the number of $4$-cycles and $6$-cycles in irregular and half-regular bipartite graphs, with $g \geq 4$ and $g \geq 6$, respectively. 

\begin{flushleft}
\noindent {\bf Index Terms:}
Counting cycles, cycle multiplicity, short cycles, bipartite graphs, Tanner graphs, low-density parity-check (LDPC) codes, bi-regular bipartite graphs,  irregular bipartite graphs, half-regular bipartite graphs, graph spectrum, girth.

\end{flushleft}

\end{abstract}

\section{introduction}
The performance of low-density parity-check (LDPC) codes under iterative message-passing algorithms is highly dependent on the structure of the code's Tanner graph, in general, and the distribution of short cycles, in particular, see, e.g.,~\cite{mao2001heuristic},~\cite{hu2005regular},~\cite{halford2006algorithm},~\cite{xiao2009error},~\cite{MR3071345}. 
Cycles play a particularly important role in the error floor performance of LDPC codes, where they are the main substructure of the trapping sets~\cite{asvadi2011lowering},~\cite{MR2991821},~\cite{MR3252383},~\cite{HB-CL},~\cite{HB-IT1},~\cite{HB-IT2}. The close relationship between the performance of graph-based coding schemes and
the cycle structure of the graph, especially the number of short cycles, has motivated a flurry of research activity on the study of cycle distribution and the counting of short cycles in bipartite graphs~
\cite{halford2006algorithm}, \cite{karimi2012counting}, \cite{karimi2013message}, \cite{dehghan2016new},  \cite{blake2017short}.

Counting cycles of a given length in a general graph is known to be NP-hard \cite{flum2004parameterized}.  The problem remains NP-hard even for bipartite graphs \cite{MR1405031}.
Karimi and Banihashemi\cite{karimi2013message} proposed an efficient message-passing algorithm to count the number of cycles of length less than $2g$, in a general graph, where $g$ is the girth of the graph. They also proposed 
a less complex algorithm for bipartite graphs with quasi-cyclic (QC) structure based on the relationship between the cycle multiplicities and the eigenvalues of the directed edge matrix of the graph~\cite{karimi2012counting}.   
Distribution of cycles in different ensembles of bipartite graphs was studied in~\cite{dehghan2016new}. 
It was shown in \cite{dehghan2016new} that for random irregular and bi-regular bipartite graphs, the multiplicities of cycles of different lengths have independent Poisson distributions with
the expected values only a function of the cycle length and the degree distribution, and independent of the block length.

The spectrum $\{\lambda_i\}$ of a graph $G$, defined as the eigenvalues of its adjacency matrix, is an important characteristic of $G$. 
It is known that $\sum_{i} \lambda_i=0$, $\sum_{i}\lambda_i^2=2|E(G)|$, where $|E(G)|$ is the number of edges of $G$, and
$\sum_{i}\lambda_i^3=6 N_3(G)$, where $N_3(G)$ is the number of $3$-cycles of $G$. The last result, however, cannot be extended to larger cycles, i.e., one cannot 
count cycles of length larger than $3$ as a function of only the spectrum of the graph. 
For instance, the complete bipartite graph $K_{1,4}$ (with one node on one side and four nodes on the other side of the bipartition), and the graph $   C _4 \cup K_1$ (the union of a $4$-cycle and a single node) 
have the same  spectrum
$\{-2,0, 0, 0,2\}$, but they clearly have different number of 4-cycles. 
Recently, Blake and Lin~\cite{blake2017short} computed the multiplicity of cycles of 
length $g$ in bi-regular bipartite graphs as a function of the spectrum of the graph plus the extra information about the number and the degree of the nodes on each side of the bipartition. 
In \cite{blake2017short}, it is stated: ``While only cycles of length equal to the girth are considered here, it was originally hoped that a more
detailed study would yield expressions for cycle length $g+2$ although this would involve more complex computations. The authors were unsuccessful in this but hope
this work might lead other researchers to consider the problem which could lead to a more analytical approach to code design than has yet been possible.''

Inspired by~\cite{blake2017short}, and in relation to the above statement, in this work, we extend the results of \cite{blake2017short} to compute the number of cycles of length $g+2, \ldots, 2g-2$, in bi-regular bipartite graphs in terms of the graph's degree distribution and its spectrum. 
Moreover, we compute the multiplicity of $4$-cycles in irregular graphs with $g \geq 4$, and $6$-cycles in half--regular 
graphs with $g \geq 6$, in terms of the degree distribution and the spectrum of the graph. 

Complementary to the above positive results are the negative results, presented in Table \ref{TXY1}, of the cases for which it is, in general, impossible to compute the multiplicity of cycles of a certain length $i$ in a bipartite graph of girth $g$ from only the spectrum and the degree distribution of the graph~\cite{arxiv}. These cases are denoted by ``IP," brief for ``impossible," in the table.

\begin{table}[ht]
\caption{A summary of our results on the possibility of counting cycles of length $i$ in bi-regular, half-regular and irregular bipartite graphs with girth $g$ using only the spectrum and the degree distribution of the graph.  (Notations ``P"  and ``IP'' are used for ``possible'' and ``impossible,'' respectively.)}
\begin{center}
\scalebox{1}{
\begin{tabular}{ |c|c||c|c|c|  }
\hline
 &              & $i=g$ & $i=g+2,g+2,\ldots, 2g-2$ &  $i=2g,2g+2,\ldots$\\ \hline \hline

\multirow{1}{*}{Bi-regular}
 & $g\geq 4$    & P    & P     & IP  \\ \hline \hline

\multirow{3}{*}{Half-regular}
 & $g \leq 6$        & P     & IP    & IP  \\
 & $g\geq 8$    &IP     & IP    & IP  \\ \hline \hline

\multirow{3}{*}{Irregular}
 & $g=4$        & P     & IP    & IP  \\
 & $g\geq 6$    &IP     & IP    & IP  \\ \hline

\end{tabular}
}
\end{center}
\label{TXY1}
\end{table}

The organization of the rest of the paper is as follows: In Section~\ref{sec1}, we present some definitions and notations. This is followed in Section~\ref{sec2} by our results
on computing the number of cycles of length $g+2, \ldots, 2g-2$, in bi-regular bipartite graphs using the spectrum and the degree distribution of the graph.
In Section~\ref{sec35}, we first consider irregular bipartite graphs with $g \geq 4$, and compute the number of $4$-cycles in such graphs as a function of the graph spectrum and its degree distribution. We then derive a similar result for counting $6$-cycles in half-regular bipartite graphs with $g \geq 6$. 
Section~\ref{sec4} is devoted to numerical results. The paper is concluded with some remarks in Section~\ref{sec5}.

\section{Definitions and notations}
\label{sec1}

For a given graph $G$, we denote the node set and the edge set of $G$ by $V(G)$ and $E(G)$, 
respectively. The shorthands $V$ and $E$ are used if there is no ambiguity about the graph. In this work, we consider undirected graphs with no loops or parallel edges (i.e., simple undirected graphs). 
An edge $e \in E$ with endpoints $u \in V$ and $w \in V$ is denoted by $\{u,w\}$, or by $uw$ or $wu$, in brief. 
The number of edges incident to a node $v$ is called the {\em degree} of $v$, and is denoted by $d(v)$.
For a given graph $G$, a {\it walk} of length $k$ is a sequence of nodes
$v_1, v_2, \ldots , v_{k+1}$ in $V$ such that $\{v_i, v_{i+1}\} \in E$, for all $i \in \{1, \ldots , k\}$. Equivalently, a walk of length $k$ can be described
by the corresponding sequence of $k$ edges.
Let $\mathcal{W}=v_1, v_2, \ldots , v_{k+1}$ be a walk in the graph $G$, we say that $\mathcal{W}'=v_{l_0},v_{l_1}, \ldots, v_{l_s}$, for $s \geq 1$, is a subwalk of $\mathcal{W}$ if there is an index $i$, $1  \leq i \leq k-s+1$, such that $v_i=v_{l_0}, v_{i+1}=v_{l_1}, \ldots, v_{i+s}=v_{l_s}$.
A walk $v_1, v_2, \ldots , v_{k+1}$ is a {\it path} if all the nodes $v_1, v_2, \ldots , v_k$ are distinct. A walk is called a
{\it closed walk}  if the two end nodes are the same, i.e.,
if $v_1 = v_{k+1}$. Under the same condition, a path is called a {\it cycle}. In the rest of the paper, the term ``path'' is used only to refer to the paths that are not cycles. We also use the notation $P_n$ to denote a path with $n$ nodes.
The {\em length} of a walk, path or cycle is the number of its edges.
We denote cycles of length $k$, also referred to as $k$-cycles, by ${\cal C}_k$. We use $N_k$ for $|{\cal C}_k|$.  The length of the shortest cycle(s) in a graph is called {\em girth} and is denoted by $g$.

A graph $G$ is {\it connected}, if there is a path between any two nodes of $G$. If the graph $G$ is not connected, we say that it is disconnected. A {\it connected  component}  of  a  graph  is  a  connected
subgraph such that there are no edges between nodes of the subgraph and nodes of the rest of the graph.

A graph $G=(V,E)$ is called {\it bipartite}, if the node set $V$ can be
partitioned into two disjoint subsets $U$ and $W$, i.e., $V = U \cup W \text{ and } U \cap W =\emptyset $, such that every edge in $E$ connects a node
from $U$ to a node from $W$. A graph is bipartite if and only if the lengths of all its cycles are even.
Tanner graphs of LDPC codes are bipartite graphs, in which $U$ and $W$ are referred to as {\it variable nodes} and {\it check nodes}, respectively. 
Parameters $n$ and $m$ in this case are used to denote $|U|$ and $|W|$, respectively. Parameter $n$ is the code's block length and the code rate 
$R$ satisfies $R \geq 1- (m/n)$. 

The {\it degree sequences} of a bipartite graph $G$ are defined as the two monotonic non-increasing sequences of the node degrees on the two sides of the graph. 
For example, the complete bipartite graph $K_{2,3}$ has degree sequences $(3,3)$ and $(2,2,2)$. Clearly, the degree sequences also contain the 
information about the number of nodes on each side of the graph.
A bipartite graph $G = (U\cup W,E)$ is called {\it bi-regular}, if all the nodes on the same side of the bipartition have the same degree,
i.e., if all the nodes in $U$ have the same degree $d_u$ and all the nodes in $W$ have the same degree $d_w$.
It is clear that, for a bi-regular graph, $|U|d_u=|W|d_w=|E(G)|$. A bipartite graph is called {\em half-regular}, if all the nodes on one side of the bipartition have the same degree.
A half-regular Tanner graph can be either variable-regular or check-regular. A Tanner graph $G = (U\cup W,E)$ is called variable-regular with variable degree $d_v$,
if for each variable node $u_i\in U$, $d({u_i}) = d_v$. Similarly, a Tanner graph is called {\em check-regular} with check degree $d_c$,
if for each check node $w_i\in W$, $d({w_i}) = d_c$. Also, a $(d_v, d_c)$-regular Tanner graph is a bi-regular graph with variable degree $d_v$ and check degree $d_c$.
A bipartite graph that is not bi-regular is called {\it irregular}. With this definition, half-regular graphs are a special case of irregular graphs.

A bipartite graph $G(U \cup W, E)$ is called {\em complete}, and is denoted by $K_{|U|,|W|}$, if every node in $U$ is connected to every node in $W$. 
The notation $K_m$ is used for a complete (non-bipartite) graph with $m$ nodes (in which every node is connected to all the other nodes).

A {\it tree} is an undirected graph in which any two nodes are connected by exactly one path. Any  connected graph is a tree if and only if it does not have any cycle. 
A {\em rooted tree} is a tree in which one node is designated as the root. 
In a given tree, a node $v$ is called {\it leaf} if $d(v) = 1$. The {\it height} of a node in a rooted 
tree is the length of the longest path from that node to a leaf when moving away from the root. The height of the tree is the height of the root. 

Consider the graph $G = (V, E)$, and let $S \subset V$ be any subset of nodes of $G$. Then, the {\it node-induced subgraph} (or simply ``induced subgraph") on the set of nodes $S$ is the graph whose node set is $S$ and whose edge set consists of all the edges in $E$ that have both endpoints in $S$. Similarly, an {\it edge-induced subgraph}  on the set of edges $D \subset E$ is the graph that consists of the edges $D$ together with any nodes that are the endpoints of the edges in $D$.

The {\it adjacency matrix} of a graph $G$ is the matrix $A = [a_{ij}]$, where $a_{ij}$ is the number of edges connecting the node $i$ to the node
$j$ for all $i, j\in V$. The matrix $A$ is symmetric and since we have assumed that $G$ has no parallel edges or loops, $a_{ij}\in\{0, 1\}$,
for all $i, j\in V$, and $a_{ii} = 0$, for all $i \in V$. The set of the eigenvalues $\{\lambda_i\}$ of $A$ is called the {\em spectrum} of the graph. 
It is well-known that the spectrum of a disconnected graph is  the disjoint union of the spectra of its components \cite{MR2882891}.
One important property of the adjacency matrix is that the number of walks
between any two nodes of the graph can be determined using the powers of this matrix. More precisely, the entry in
the $i^{\text{th}}$ row and the $j^{\text{th}}$ column of $A^k$, $[A^k]_{ij}$ , is the number of walks of length $k$ between nodes $i$ and $j$. In particular, $[A^k]_{ii}$
is the number of closed walks of length $k$ containing node $i$. The total number of closed walks of length $k$ in $G$ is thus $tr(A^k)$, where $tr(\cdot)$ is the trace of a matrix.\footnote{In this calculation, closed walks with the same set of edges but with different starting edge or with different direction are distinguished and counted separately.}
Since $tr(A^k)= \sum_{i=1}^{|V|}\lambda_i^k$, it follows that the multiplicity of closed walks of different length in a graph can be obtained using the spectrum of the graph.

In the figures of this paper, edges of a graph are shown by straight lines. In this work, we need to closely examine different types of closed walks. 
To show a closed walk in a graph, we draw a closed curved line alongside the graph following the edges of the walk. 
We place an arrow at the location corresponding to the starting edge of the walk to specify the starting edge and  the direction of the walk. An example is shown in Fig. \ref{graphBBB}.

\begin{figure}[ht]
\begin{center}
\includegraphics[scale=.40]{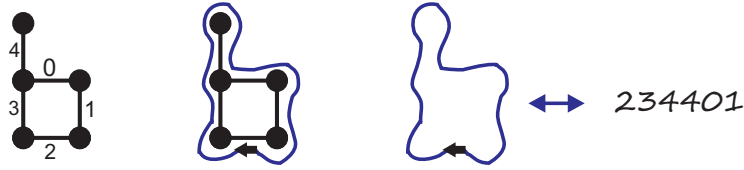}
\caption{The representation  of a graph and one of its closed walks.
} \label{graphBBB}
\end{center}
\end{figure}


In general, the spectrum of a graph does not uniquely determine the graph. Two graphs are called {\it cospectral} or {\it isospectral} if they have the same spectrum.
For example, the complete bipartite graph $K_{1,4}$, and the graph $ C _4 \cup K_1$ (the union of a 4-cycle and a single node) are cospectral, both having the spectrum $\{-2,0,0,0,2\}$
or $\{-2, 0^3, 2\}$. 
On the other hand, there are graphs that are known to be uniquely determined by their spectrum.
Two examples are the complete graph $K_n$, and the complete bipartite graph $K_{n,n}$ \cite{MR2022290}.

It is well-known that a number of properties of a graph $G$ can be uniquely specified based on the information of the graph's spectrum (see, for example, \cite{MR2022290}). 
These properties include the number of nodes and edges of $G$, the number of cycles of length three in $G$, as well as properties that involve a binary question such as whether $G$ is regular or not, whether $G$ is regular with any fixed girth or not, 
and whether $G$ is bipartite or not. 
In particular, a given graph is bipartite if and only if its spectrum is symmetric with respect to the origin.
On the other hand, there are some other important properties of a graph, such as the number of cycles of length larger than three, that cannot be 
determined by the spectrum alone. In this work, we are interested in counting the number of short cycles in bipartite graphs. In particular, to complement the results of \cite{blake2017short}, we investigate 
whether such counting problems can be solved for cycles larger than the girth  in bi-regular bipartite graphs, or for cycles in bipartite graphs that are not bi-regular, by using the spectrum of the graph and 
the extra information about the node degrees of the graph. 

\section{Computing the multiplicity of short cycles in bi-regular bipartite graphs}
\label{sec2}

In this section, we compute the multiplicity of $k$-cycles of bi-regular bipartite graphs with $g \geq 4$, for $g+2 \leq k \leq 2g-2$, in terms of the spectrum and the degree sequences of the graph.
The results presented in this section complement those of Blake and Lin~\cite{blake2017short} for $g$-cycles.
The results are obtained by characterizing and counting closed walks of length $k$ that are not cycles, and subtracting their multiplicity from the total number of 
closed walks of length $k$. The latter is easily obtained using the spectrum of the graph. In this section, we also provide a brief review of the main result of \cite{blake2017short}, and 
propose an alternate approach for the calculation of the number of closed cycle-free walks in a bipartite graph.

\subsection{Categorization of closed walks}

A closed walk $\mathcal{W}$ is called {\em cycle-free} if the edge-induced subgraph on the set of edges of $\mathcal{W}$ does not have any cycle.
An example of a closed cycle-free walk  is shown in Fig.~\ref{V2graphB1}(a). We say a closed walk $\mathcal{W}$ is a {\it closed walk with cycle}, or CWWC, in brief, if $\mathcal{W}$ is not a cycle but the edge-induced subgraph on the set of edges of  $\mathcal{W}$ has at least one cycle. An example of a closed walk with cycle is shown in Fig. \ref{V2graphB1}(b). This closed walk has length $10$ and traverses through the edge $uu'$ three times. We note that if $uu'$ is traversed only once, we still have a CWWC but of length $8$. 

\begin{figure}[ht]
\begin{center}
\includegraphics[scale=.4]{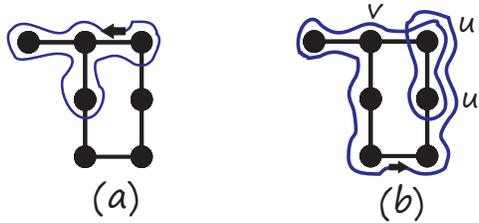}
\caption{(a) An example of  a closed cycle-free walk of length six, (b) An example of a closed walk of length 10 with cycle or a $10$-CWWC.
} \label{V2graphB1}
\end{center}
\end{figure}

\begin{lem}\label{V2L1}
All the closed walks of length $k$ in a graph $G$ can be partitioned into three categories: (i) $k$-cycles, (ii) closed cycle-free walks of length $k$, and (iii) closed walks of length $k$ with cycle.
\end{lem}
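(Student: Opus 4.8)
The statement is a partition claim about the set of all closed walks of length $k$, so the plan is to verify the two defining properties of a partition: that the three categories are \emph{exhaustive} (every closed walk of length $k$ lies in at least one of them) and \emph{mutually exclusive} (no closed walk lies in two of them at once). Both properties will follow directly from the definitions of \emph{closed cycle-free walk} and \emph{CWWC} introduced immediately before the lemma, so the argument is a clean case analysis rather than a computation.

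First I would fix an arbitrary closed walk $\mathcal{W}$ of length $k$ in $G$ and form the edge-induced subgraph $H$ on the set of edges traversed by $\mathcal{W}$. The single organizing device is the binary question: does $H$ contain a cycle? If $H$ contains no cycle, then $\mathcal{W}$ is by definition a closed cycle-free walk, placing it in category (ii). If $H$ does contain a cycle, I split further according to whether $\mathcal{W}$ itself is a cycle: if it is, then $\mathcal{W}$ is a $k$-cycle and belongs to category (i); if it is not, then $\mathcal{W}$ is a closed walk that is not a cycle yet whose edge-induced subgraph has a cycle, i.e.\ a $k$-CWWC, which is category (iii). Since these three outcomes cover every possibility of the two questions asked, every closed walk of length $k$ lands in one of the three categories, giving exhaustiveness.

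For mutual exclusivity I would argue as follows. Categories (ii) and (iii) are separated by the very question used above — the edge-induced subgraph of a cycle-free walk has no cycle, whereas that of a CWWC has at least one — so they cannot overlap. To separate category (i) from the other two, I would observe that the edge-induced subgraph of a $k$-cycle is the $k$-cycle itself, which plainly contains a cycle; hence a $k$-cycle is never cycle-free and is disjoint from (ii), and since a CWWC is by definition not a cycle, a $k$-cycle is disjoint from (iii) as well.

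The only point requiring care — and what I would flag as the sole potential subtlety — is that a $k$-cycle is itself a legitimate closed walk of length $k$ whose edges induce a subgraph containing a cycle, so one must confirm that such walks are correctly routed to category (i) and not inadvertently counted in (iii). This is precisely what the ``not a cycle'' clause in the definition of CWWC guarantees, so no genuine graph-theoretic obstacle arises and I expect no lengthy steps.
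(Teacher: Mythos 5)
Your proof is correct, and it matches the paper's intent: the paper states Lemma~\ref{V2L1} without any proof, treating it as immediate from the definitions of cycle-free closed walks and CWWCs, and your case analysis (split on whether the edge-induced subgraph contains a cycle, then on whether the walk is itself a cycle) is exactly the implicit argument, including the correct observation that the ``not a cycle'' clause in the CWWC definition is what keeps category (i) disjoint from category (iii).
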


In this work, two walks $e_{i_1},\ldots, e_{i_r} $ and $e_{j_1},\ldots, e_{j_r}$ are considered identical, and thus counted as one, if and only if for every $x$ in the range $1\leq x \leq r$, $e_{i_x}=e_{j_x}$.
In other words, closed walks that pass through the same set of edges but in different directions or with different starting edge are considered distinguishable and counted separately. 
The following theorem then follows immediately from Lemma~\ref{V2L1}.

\begin{theo}\label{L1}
For a  given $(d_v, d_c)$-regular bipartite graph $G$, the number of $i$-cycles is given by:
\begin{equation}\label{E3}
N_i = [\sum_{j=1}^{|V|} \lambda_j^i - \Omega_{i}(d_v,d_c,G) - \Psi_{i}(d_v,d_c,G)]/(2i),
\end{equation}
where $\{\lambda_j\}_{j=1}^{|V|}$ is the spectrum of $G$, and $\Omega_{i}(d_v,d_c,G)$ and $\Psi_{i}(d_v,d_c,G)$ 
are the number of closed cycle-free walks of length $i$ and closed walks with cycle of length $i$ in $G$, respectively.
\end{theo}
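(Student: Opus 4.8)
The plan is to count the total number of closed walks of length $i$ in $G$ in two different ways and to equate the two counts. The first count is supplied directly by the spectrum: as established earlier in the excerpt, the total number of closed walks of length $i$ equals $tr(A^i)=\sum_{j=1}^{|V|}\lambda_j^i$, under the stated convention that closed walks with distinct starting edges or distinct directions are counted as distinct. For the second count, I would invoke Lemma~\ref{V2L1}, which partitions these closed walks into three disjoint categories: $i$-cycles, closed cycle-free walks, and closed walks with cycle. By definition the last two categories contribute exactly $\Omega_{i}(d_v,d_c,G)$ and $\Psi_{i}(d_v,d_c,G)$, respectively, so the only quantity left to determine is the number of closed walks lying in the first category, i.e., those whose edge-induced subgraph is a single $i$-cycle.

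The crux of the argument is this last enumeration. A single $i$-cycle, viewed as a subgraph, is traced by several closed walks once we adopt the convention above: a walk tracing the cycle may begin at any one of its $i$ edges and may proceed in either of the two directions, giving exactly $2i$ distinct closed walks per cycle. I would verify that these $2i$ walks are pairwise distinct, since they correspond to different ordered sequences of edges, and that closed walks arising from two different $i$-cycles are automatically distinct, since they traverse different edge sets. Hence the number of closed walks in category (i) is precisely $2i\,N_i$. Combining the two counts yields $\sum_{j=1}^{|V|}\lambda_j^i = 2i\,N_i + \Omega_{i}(d_v,d_c,G) + \Psi_{i}(d_v,d_c,G)$, and solving for $N_i$ gives the claimed expression~\eqref{E3}.

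I expect the only delicate point to be the factor $2i$ in the denominator, which is really a bookkeeping issue rather than a genuine obstacle: one must apply the counting convention (distinguishing both the starting edge and the direction of traversal) consistently to the trace formula and to the enumeration of cycle-tracing walks, and then check that no over- or under-counting occurs across distinct cycles. Once this is reconciled, the theorem follows immediately from Lemma~\ref{V2L1}, as asserted. I would also remark that bi-regularity plays no role in the identity~\eqref{E3} itself, which holds for an arbitrary graph; the $(d_v,d_c)$-regular hypothesis becomes relevant only in the subsequent task of evaluating $\Omega_{i}$ and $\Psi_{i}$ in closed form from the degree sequences and the spectrum.
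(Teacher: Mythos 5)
Your proof is correct and follows essentially the same route as the paper, which states Theorem~\ref{L1} as an immediate consequence of Lemma~\ref{V2L1} together with the stated counting convention; you have simply made explicit the two details the paper leaves implicit, namely that $tr(A^i)=\sum_j\lambda_j^i$ counts all closed walks of length $i$ and that each $i$-cycle gives rise to exactly $2i$ distinct closed walks ($i$ starting positions times $2$ directions). Your closing remark that bi-regularity is irrelevant to the identity itself and matters only for evaluating $\Omega_i$ and $\Psi_i$ in closed form is also consistent with how the paper proceeds.
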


The multiplicity $\Omega_{i}(d_v,d_c,G)$ of closed cycle-free walks of length $i$ in a $(d_v, d_c)$-regular bipartite graph $G$ was computed in \cite{blake2017short}. In the following, we review the result of \cite{blake2017short} and also provide an alternate approach for the computation.

\subsection{Calculation of $\Omega_{i}(d_v,d_c,G)$}

\subsubsection{Approach of \cite{blake2017short}}
Blake and Lin \cite{blake2017short} used the following formula to calculate $\Omega_{i}(d_v,d_c,G)$, for $2 \leq i \leq 2g-2$:
\begin{equation}\label{E2}
\Omega_{i}(d_v,d_c,G) = n\times S_{d_v,d_c,i}+m\times S_{d_c,d_v,i}\:.
\end{equation}
In (\ref{E2}), parameters $n$ and $m$ are the number of variable and check nodes in $G$, respectively, and $S_{d_v,d_c,i}$ ($S_{d_c,d_v,i}$) represents the  number of  
closed cycle-free walks of length $i$ from a variable node $v$ (a check node $c$) to itself. Generating functions were then used to compute the functions $S_{x,y,i}$ recursively.
In Table \ref{T1}, we have shown the functions $S_{x,y,i}$ for values of $i$ up to ten.

\begin{table}[ht]
\scriptsize
\caption{In a  $(d_v, d_c)$-regular bipartite graph, the number of  closed cycle-free walks  of length $i$ from any variable node $v$ (check node $c$) to itself is equal to $S_{d_v,d_c,i}$ ($S_{d_c,d_v,i}$) \cite{blake2017short}}
\begin{center}
\scalebox{1}{
\begin{tabular}{|r|c|c|}
\hline
$i$ & $Q_{x,y,i}$ & $S_{x,y,i}$ \\ \hline
$2$ & $x$ & $x$ \\ \hline
$4$ & $x(y-1)$  & $x(x+y-1)$ \\ \hline
$6$ & $x((y-1)^2 +(x-1)(y-1))$ & $x(x^2 + 2x(y-1) + (x-1)(y-1) + (y-1)^2 )$ \\ \hline
$8$ & $x((y-1)^3 + 3(x-1)(y-1)^2 )$ & $x\Big( (y-1)^3 + 3(x-1)(y-1)^2 + (x-1)^2 (y-1)  \Big)$ \\
    &   $+x (x-1)^2 (y-1)$     &  $+ x\Big( 2x((y-1)^2 + (x-1)(y-1))+x(y-1)^2 +3x^2 (y-1) + x^3 \Big)$ \\ \hline
$10$ & $x((y-1)^4 + 6(x-1)(y-1)^3 )$ & $Q_{x-1,y-1,10} + 2Q_{x-1,y-1,2} Q_{x-1,y-1,6} + 2Q_{x-1,y-1,4} Q_{x-1,y-1,6} $\\
   &  $+x(6(x-1)^2 (y-1)^2 + (x-1)^3 (y-1))$  & $+ 3Q_{x-1,y-1,2} Q_{x-1,y-1,4}^2 + 3Q_{x-1,y-1,2}^2 Q_{x-1,y-1,6} + $ \\
   & & $4Q_{x-1,y-1,2}^3 Q_{x-1,y-1,4} + Q_{x-1,y-1,2}^5 $\\
\hline
\end{tabular}
}
\end{center}
\label{T1}
\end{table}

\subsubsection{Alternate approach}

Let $T_{d_v,d_c,i}$ be a rooted tree of height $\frac{i}{2} $ with the root node of degree $d_v$ at level zero, and the nodes of succeeding levels with alternating degrees $d_c$ and $d_v$, in odd and even levels of the tree, respectively.  
In such a tree, all the leaves are in level $\frac{i }{2}$. As an example, the tree $T_{3,4,4}$ is shown in Fig. \ref{V4g1}.
Let $A(T_{d_v,d_c,i})$ be the adjacency matrix of $T_{d_v,d_c,i}$ such that the root corresponds to the first row of the matrix. Then for a given $(d_v, d_c)$-regular Tanner graph $G$,  
the number of  closed cycle-free walks  of length $i$ from a variable node $v$ in $G$ to itself (i.e., $S_{d_v,d_c,i}$) is equal to the $(1,1)$-th entry of the matrix $A(T_{d_v,d_c,i})^i$. This follows from the fact that there are no cycles in $T_{d_v,d_c,i}$, and thus all the closed walks are cycle-free.
Similarly, $S_{d_c,d_v,i}$ is equal to the $(1,1)$-th entry of the matrix $A(T_{d_c,d_v,i})^i$. Therefore, to obtain $S_{x,y,i}$ for different values of $x$, $y$ and $i$, one can form the adjacency matrix of 
$T_{x,y,i}$, calculate its $i$-th power and then take the $(1,1)$-th entry of the resulting matrix. 
Using this technique, we have calculated $S_{x,y,i}$ for some practical values of $x$, $y$, and $i=10, 12$. These are provided in Table \ref{T2}. 
\begin{figure}[ht]
\begin{center}
\includegraphics[scale=.35]{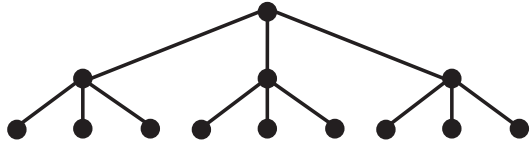}
\caption{The tree $T_{3,4,4}$.
} \label{V4g1}
\end{center}
\end{figure}

\begin{table}[ht]
\scriptsize
\caption{The number of  closed cycle-free walks of length $i$ in an $(x,y)$-regular bipartite graph from a node with degree $x$ to itself for different $x$, $y$ and $i$ (i.e., $S_{x,y,i}$). }
\begin{center}
\scalebox{1}{
\begin{tabular}{ |c||c|c|c|c|c|c|c|  }
\hline
       & $x=2$    & $x=3$    & $x=4$  & $x=5$  & $x=6$ & $x=7$ & $x=8$ \\
\hline
\multicolumn{8}{|l|}{$i=10$} \\
\hline
$y=2$ &  252    & 1278       & 4144    & 10500  & 22716 & 44002 & 78528 \\
\hline
$y=3$ &  852     & 3543      & 10104   & 23325  & 46956 & 85827 & 145968\\
\hline
$y=4$ &  2072   &  7578      & 19864  & 43100  & 82656 & 145222 & 238928\\
\hline
$y=5$ &  4200   &  13995     & 34480  & 71445  & 132120 &225295 &361440\\
\hline
$y=6$ &  7572   &  23478     &  55104 & 110100 & 197796& --    &  --\\
\hline
$y=7$ &  12572  &  36783     & 82984  & 160925 &282276 & --    &  -- \\
\hline
$y=8$ &  19632  &  54738     & 119464  & 225900& --    & --   & --\\
\hline
\multicolumn{8}{|l|}{$i=12$} \\
\hline
$y=2$ &    --    & 6486      & 26408   & 79860  & 199812 &438074&  871056\\
\hline
$y=3$ & 4324    & 23823      & 82920  & 223795  & 512748 & --   &  -- \\
\hline
$y=4$ & 13204   &  62190     & 195352 & 488980   & --   & --   &  --\\
\hline
$y=5$ & 31944   &  134277    & 391184 & --       & --   & --   &  --\\
\hline
$y=6$ &  66604  &  256374    & --     & --       & --   & --   &  --\\
\hline
$y=7$ &  125164 &  448731    & --     & --       & --   & --   &  -- \\
\hline
$y=8$ & 217764  &  --        & --     & --       & --   & --   &  --\\
\hline
\end{tabular}
}
\end{center}
\label{T2}
\end{table}

\subsection{Calculation of $N_{g}$}

It is clear that $\Psi_{g}(d_v,d_c,G)=0$. Based on this and Theorem \ref{L1}, we have the following corollary.

\begin{cor} \cite{blake2017short} \label{Th1}
The total number of cycles of length $g$ in a $(d_v, d_c)$-regular bipartite graph $G(V,E)$ is equal to:
\begin{equation}
\label{E1}
N_g =
\dfrac{\sum_{i=1}^{|V|}\lambda_i^g - \Omega_{g}(d_v,d_c,G)}{2g},
\end{equation}
where $\{\lambda_i\}$ is the spectrum of $G$ and $\Omega_{g}(d_v,d_c,G)$ is given by (\ref{E2}).
\end{cor}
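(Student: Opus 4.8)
The plan is to derive the corollary directly from Theorem~\ref{L1} by specializing to $i=g$ and showing that the closed-walk-with-cycle term vanishes, i.e., $\Psi_g(d_v,d_c,G)=0$. Theorem~\ref{L1} already partitions every closed walk of length $i$ into an $i$-cycle, a closed cycle-free walk, or a closed walk with cycle (CWWC), and expresses $N_i$ in terms of $\sum_j \lambda_j^i$, $\Omega_i$, and $\Psi_i$ via (\ref{E3}). Hence, once I establish $\Psi_g=0$, substituting $i=g$ into (\ref{E3}) immediately gives (\ref{E1}), with $\Omega_g$ supplied by the earlier formula (\ref{E2}).

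The heart of the argument is the claim that no CWWC of length $g$ can exist in a graph of girth $g$, which I would prove by contradiction. Suppose $\mathcal{W}$ is a CWWC of length $g$. By the definition given before Lemma~\ref{V2L1}, $\mathcal{W}$ is not a cycle, yet the edge-induced subgraph $H$ on the edges of $\mathcal{W}$ contains at least one cycle $C$. Since $\mathcal{W}$ has length $g$, it traverses exactly $g$ edges counted with multiplicity, so $H$ has at most $g$ distinct edges.

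Next I would invoke the girth hypothesis: every cycle of $G$ has length at least $g$, so $C$ uses at least $g$ distinct edges. Combining this with $|E(H)|\le g$ forces $C$ to have length exactly $g$ and to use all $g$ distinct edges of $H$; in particular, the $g$ edges traversed by $\mathcal{W}$ are pairwise distinct and coincide with the edges of the $g$-cycle $C$. But a closed walk of length $g$ using $g$ distinct edges must traverse each of them exactly once, and in the $2$-regular graph $C$ such an Eulerian closed walk is forced to go once around, so $\mathcal{W}$ is itself the cycle $C$. This contradicts the assumption that $\mathcal{W}$ is not a cycle, establishing $\Psi_g(d_v,d_c,G)=0$.

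With $\Psi_g=0$ in hand, the proof concludes by plugging $i=g$ into (\ref{E3}). The only genuinely delicate point is the edge-counting step: one must keep straight that ``length $g$'' counts edges with multiplicity while the girth bound counts distinct edges, and that the simultaneous saturation of both inequalities pins $H$ down to a single $g$-cycle traversed exactly once. Everything else is a direct substitution, which is precisely why the statement can be packaged as a corollary of Theorem~\ref{L1}.
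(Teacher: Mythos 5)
Your proposal is correct and takes essentially the same route as the paper: the paper obtains the corollary by setting $i=g$ in Theorem~\ref{L1} and asserting, as ``clear,'' that $\Psi_{g}(d_v,d_c,G)=0$. Your contradiction argument (girth forces the edge-induced subgraph to be a single $g$-cycle traversed exactly once, so the walk is a cycle) merely spells out the details behind that one-line observation, and it does so correctly.
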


In the following, we consider an example of a bi-regular bipartite graph, for which the number of short cycles can be computed using simple combinatorial 
arguments. We use the same example also in Subsection \ref{sub234} to demonstrate that the results obtained by our computational technique match the results obtained by combinatorics.

\begin{ex}\label{Example1}
Consider the complete bipartite graph $K_{x,x}$. For this graph, $g=4$, and we are interested in counting the number of $4$-cycles and $6$-cycles ($2g-2 = 6$). 
Let $i$ be 4 or 6. To find an $i$-cycle, one needs to choose $i/2$ nodes out of the $x$ nodes on each side of the graph with ordering. This results in
$$ N_i= \frac{1}{i}\Big(\frac{x!}{(x-\frac{i}{2})!}\Big)^2\:,$$
where the division by $i$ is due to the fact that in the above process, each cycle is counted $i$ times.
From the above formula, we have: $N_4=x^2(x-1)^2/4$ and $N_6=x^2(x-1)^2(x-2)^2/6$.
Now, we use Corollary~\ref{Th1} to calculate the number of $4$-cycles. The eigenvalues of $K_{x,x}$ are $\{0^{2x-2}, x, -x\}$. Thus, $\sum_{i} \lambda_i^g = 2x^4$. 
From Table \ref{T1}, we have: $S_{x,x,4}=x(2x-1)$.  By replacing this in (\ref{E2}), we therefore have: $\Omega_{4}(x,x,G)=2x^2(2x-1)$. 
Consequently, by Corollary \ref{Th1}, we obtain $N_4=\Big(2x^4 - 2x^2(2x-1)\Big)/8=x^2(x-1)^2/4$, which is the same result as the one we derived by combinatorial arguments.
\end{ex}

\subsection{Properties and characterization of CWWCs (of length at most $2g-2$)}

To calculate $\Psi_{i}(d_v,d_c,G)$ for a $(d_v, d_c)$-regular bipartite graph $G$, in the following, we first study some important properties of closed walks with cycles.

\begin{lem}\label{V2L2}
Let $G$ be a bipartite graph with girth $g$. If $\mathcal{W}$ is a closed walk of length $i$ with cycle in $G$, where $i\leq 2g-2$,
then there is an edge $e$ in $G$ such that $\mathcal{W}=\mathcal{W}'ee\mathcal{W}''$, or $\mathcal{W}= ee\mathcal{W}'$, or $\mathcal{W}=\mathcal{W}'ee$, or $\mathcal{W}=e\mathcal{W}'e$, where $\mathcal{W}'$ and $\mathcal{W}''$ are subwalks of $\mathcal{W}$.
\end{lem}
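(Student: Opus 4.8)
The plan is to prove the contrapositive. I will show that a closed walk $\mathcal{W}$ of even length $i$ containing no pair of cyclically-consecutive identical edges (call such a walk \emph{reduced}), which is a CWWC in the sense of the paper (not a cycle, yet with a cycle in its edge-induced subgraph), must satisfy $i \geq 2g$. Since in a bipartite graph every closed walk has even length, this yields exactly the claim: if $i \leq 2g-2$, then some edge $e$ is traversed twice in a row, and recording where this backtrack sits in the edge sequence (in the interior, at the front, at the end, or across the wrap-around point $v_{i+1}=v_1$) produces the four listed forms $\mathcal{W}'ee\mathcal{W}''$, $ee\mathcal{W}'$, $\mathcal{W}'ee$, and $e\mathcal{W}'e$.

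First I would record the structural consequences of reducedness. A reduced closed walk cannot pass through a degree-one vertex of its support, since entering and leaving a leaf forces an immediate backtrack; more importantly, it cannot be supported on a forest at all. I would isolate this as an auxiliary fact: \emph{every nontrivial closed walk whose edge-induced subgraph is acyclic contains an immediate backtrack.} This follows from a local-maximum argument---root the tree containing the walk, track the depth of $v_1,v_2,\ldots$, and observe that at any position of locally maximal depth the two incident steps must both use the unique upward edge and hence coincide.

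Next I would extract a genuine cycle from $\mathcal{W}$. Scanning $v_1,v_2,\ldots$ until the first repeated vertex gives indices $p<q$ with $v_p=v_q$ and $v_p,\ldots,v_{q-1}$ distinct, so the segment $v_p,\ldots,v_q$ is a cycle $C'$ of $G$. Reducedness rules out a length-two loop, so $C'$ has length at least the girth, i.e.\ $\ell_1 := q-p \geq g$. If $\mathcal{W}$ equals $C'$ it is a cycle, contradicting CWWC; hence the \emph{residual} closed walk obtained by excising the segment $v_p,\ldots,v_q$ is nontrivial, of length $i-\ell_1$. Assuming $i \leq 2g-2$ gives $i-\ell_1 \leq (2g-2)-g = g-2 < g$, so this residual walk is too short to contain any cycle and is therefore supported on a forest. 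By the auxiliary fact it contains an immediate backtrack.

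The crux---and the step I expect to be the main obstacle---is transferring that backtrack back to $\mathcal{W}$ itself. Every consecutive pair of the residual walk is a consecutive pair of $\mathcal{W}$ \emph{except} the single pair straddling the excision point (and, when $p=1$, the pair straddling the wrap-around); a backtrack away from this junction is at once a backtrack of $\mathcal{W}$ and contradicts reducedness. I therefore anticipate that the real work lies in the bookkeeping at the splice: showing that either the residual's backtrack can be chosen away from the junction, or, if it is forced onto the junction, that re-selecting the excised loop (or passing to a minimum-length reduced CWWC) still exposes a genuine backtrack of $\mathcal{W}$. Resolving this junction analysis completes the contradiction and, together with the position bookkeeping above, delivers the four stated forms.
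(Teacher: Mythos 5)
Your strategy is essentially the paper's in its first half: the paper also argues by contradiction, scans $v_1,v_2,\ldots$ to the first repeated vertex to extract a cycle $\mathcal{W}_1$ of length at least $g$ (using the no-backtrack hypothesis to exclude a length-two closed subwalk), excises it, and notes the residual closed walk is nontrivial. You diverge in the second half: the paper claims, ``using a similar argument,'' that the residual contains a second cycle $\mathcal{W}_3$, yielding $i\geq |\mathcal{W}_1|+|\mathcal{W}_3|\geq 2g$, whereas you use the length budget $i-\ell_1\leq g-2$ to conclude the residual is supported on a tree and then extract a backtrack by your depth argument. Your cyclic reformulation of the four listed forms as ``a cyclically consecutive repeated edge'' is correct and clean, and your auxiliary fact about closed walks on forests is proved correctly.

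However, as written the proposal does not prove the lemma: the junction case you explicitly leave open is a genuine gap, since the one ``new'' cyclic pair of the residual (the edges $v_{p-1}v_p$ and $v_qv_{q+1}$ straddling the excision, which coincide exactly when the walk enters and exits the excised cycle through the same edge) really can be a backtrack of the residual that corresponds to no backtrack of $\mathcal{W}$, so your contradiction does not yet go through. The fix is short, and lies in choosing the root wisely rather than in re-selecting the excised loop or passing to a minimal counterexample as you suggest: the junction pair is centered at the splice vertex $u=v_p=v_q$, so root the residual's support tree at $u$. The residual is nontrivial, hence its maximum depth is positive, and the backtrack your local-maximum argument produces is centered at a deepest vertex, which has positive depth and is therefore at a cyclic position different from the junction; that backtrack is thus an inherited pair of $\mathcal{W}$ itself, contradicting reducedness. (Alternatively: a nontrivial closed walk on a tree backtracks at every visit to every leaf of its support, and a support with an edge has at least two leaves, so there are at least two backtracks at distinct centers, of which at most one is the junction.) It is worth noting that the paper's own proof glosses over this same point: when it extracts $\mathcal{W}_3$ from $\mathcal{W}_2$ ``by a similar argument,'' the first repeated vertex of $\mathcal{W}_2$ could a priori close a length-two subwalk precisely at the junction, which the hypothesis on $\mathcal{W}$ does not directly exclude; your explicit identification of the obstacle is more careful than the source, but you must actually resolve it as above for the proof to be complete.
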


\begin{proof}
We prove this by contradiction. Suppose that $\mathcal{W}$ does not have an edge with the property described in the lemma. We start from $v$, the first node of $\mathcal{W}$, and visit 
the edges of $\mathcal{W}$ one by one until we reach the first repeated node $u$. By our assumption, the closed subwalk from $u$ to itself is a cycle.  
Call that cycle $\mathcal{W}_1$, and remove it from $\mathcal{W}$. The remaining graph is another closed walk from $v$ back to itself and its size is at least four (otherwise, an edge $e$, as described in the lemma must exist). Call this closed walk $\mathcal{W}_2$. 
Using a similar argument as before, the closed walk $\mathcal{W}_2$ must contain a cycle $\mathcal{W}_3$. 
We thus conclude that $\mathcal{W}$ has at least two edge disjoint cycles $\mathcal{W}_1$ and $\mathcal{W}_3$. This implies $i \geq 2g$, which is a contradiction. 
\end{proof}

\begin{lem}\label{V2L3}
Let $G$ be a bipartite graph with girth $g$ and $\mathcal{W}$ be a closed walk of length $i$ in $G$, $i \geq g$.
If there is an edge $e=vu$ that appears only once in $\mathcal{W}$, then the edge-induced subgraph on the set of edges of 
$\mathcal{W}$ has a cycle.
\end{lem}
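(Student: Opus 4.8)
The plan is to prove the contrapositive by a parity argument. Suppose, for contradiction, that the edge-induced subgraph $H$ on the edge set of $\mathcal{W}$ contains no cycle. Since consecutive edges of a walk share a vertex, the edges of $\mathcal{W}$ form a connected subgraph, so $H$ is connected; being also acyclic, $H$ is a tree.

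The key structural fact I would then exploit is that every edge of a tree is a cut edge. Fix any edge $f$ of $H$; deleting $f$ splits $H$ into exactly two connected components, say $H_1$ and $H_2$. Because $\mathcal{W}$ is a closed walk it begins and ends at the same node, which lies in one of these two components. Each traversal of $f$ moves the walk from one side of the cut to the other, so in order to return to its starting side the walk must cross $f$ an even number of times. Hence \emph{every} edge of $H$, and in particular the edge $e=vu$, is traversed by $\mathcal{W}$ an even number of times.

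This contradicts the hypothesis that $e$ appears exactly once (an odd number) in $\mathcal{W}$. Therefore the assumption that $H$ is acyclic is false, and $H$ must contain a cycle, as claimed. I note in passing that this argument does not actually use bipartiteness, the value of the girth $g$, or the bound $i \geq g$; these hypotheses appear to be inherited from the surrounding development, while the parity mechanism alone suffices.

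The only points requiring care — and the part I would treat as the main, though mild, obstacle — are the two structural claims used above: first, that the edges traversed by a closed walk induce a \emph{connected} subgraph, so that acyclicity forces $H$ to be a single tree rather than a disconnected forest; and second, the cut/parity statement that a closed walk crosses any cut an even number of times. Both are standard, but I would state them explicitly: the first follows from the fact that consecutive edges of $\mathcal{W}$ share an endpoint, and the second from tracking, along the linear sequence $e_1,\ldots,e_i$ of edges of $\mathcal{W}$, the side of the cut containing the current vertex and observing that this side toggles precisely when $f$ is used, so that closure of the walk forces the number of toggles to be even.
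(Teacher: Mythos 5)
Your proof is correct, but it takes a genuinely different route from the paper's. The paper argues directly and constructively: since $e=vu$ occurs exactly once in $\mathcal{W}$, one can cut the closed walk at its unique traversal of $e$ and concatenate the two remaining pieces to get a walk from $u$ to $v$ that uses only the edges of $\mathcal{W}\setminus\{e\}$; hence the edge-induced subgraph on $\mathcal{W}\setminus\{e\}$ is connected and contains a $u$--$v$ path $P$, and $P\cup\{e\}$ is the desired cycle. You instead prove the contrapositive: if the edge-induced subgraph $H$ is acyclic, then (being connected, as the edges of a walk share consecutive endpoints) it is a tree, every edge of a tree is a bridge, and a closed walk crosses any bridge --- indeed any edge cut --- an even number of times, so no edge can appear exactly once. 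Both arguments are sound, and both your version and the paper's silently make the same observation you state explicitly: bipartiteness, the girth, and the bound $i\geq g$ play no role (in the paper's proof, $i\geq g$ is a consequence of the conclusion, not an ingredient). What each approach buys: the paper's proof exhibits a cycle passing through $e$ itself, a slightly stronger conclusion; your parity proof yields the stronger contrapositive that in a cycle-free closed walk \emph{every} edge is traversed an even number of times (not merely twice), and this contrapositive form is exactly how the lemma is invoked later in the paper (in Case~2 of the proof of Lemma~\ref{L5}, where it is used to conclude that each edge of $\mathcal{W}_1$ appears at least twice). Your handling of the two auxiliary facts (connectivity of $H$, and that the side of the cut toggles precisely at traversals of $f$, since every other edge of $H$ lies entirely within one component of $H-f$) is adequate.
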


\begin{proof}
Let  $e=vu$ be an edge that appears only once in $\mathcal{W}$. Consider the edge-induced subgraph $G'$ on the set of edges of $\mathcal{W}\setminus \{e\}$. 
This subgraph is connected. In $G'$, consider the shortest path from $u$ to $v$ and call it $P$. 
The union of $P$ and $e$ is a cycle. 
\end{proof}

Consider the CWWC $\mathcal{W}$ shown in Fig. \ref{V2graphB1}(b). The closed walk $\mathcal{W}$ consists of a $6$-cycle $\zeta$ and two 
closed cycle-free walks of length two from nodes $u$ and $v$ (of cycle $\zeta$) to themselves.
In the following, we prove that any $i$-CWWC ($i\leq 2g-2$) consists of one cycle ${\zeta}$ and some closed cycle-free walks 
from the nodes of $\zeta$ to themselves.

\begin{lem}\label{L5}
Let $G$ be a  bipartite graph with girth $g$. If $\mathcal{W}$ is an $i$-CWWC, where $i \leq 2g-2$,
then, the walk $\mathcal{W}$ consists of one cycle $\zeta$ and some closed cycle-free walks from the nodes of $\zeta$ to themselves.\footnote{This result is used later to count the number of CWWCs irrespective of the direction and the starting edge of the walk. The result of this lemma should thus be interpreted accordingly, i.e., for $\mathcal{W}$ to be formed, all the edges of $\zeta$ are traversed in a given direction, and for each cycle-free subgraph ${\cal T}$ attached to one of the nodes of $\zeta$, say $v$, each edge of ${\cal T}$ is traversed an even number of times equally in each direction. This means that, in general, there is no requirement that all the traversals through the edges of ${\cal T}$ happen sequentially. Nor is there a requirement that the traversals are initiated from $v$.}
\end{lem}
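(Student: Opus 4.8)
The plan is to first establish that the edge-induced subgraph $H$ on the set of edges of $\mathcal{W}$ is \emph{unicyclic} (it contains exactly one cycle $\zeta$), and then to peel off back-and-forth excursions one at a time, using Lemma~\ref{V2L2}, until only a single traversal of $\zeta$ remains. Throughout I would use the elementary fact that the edges traversed an odd number of times by a closed walk form an even subgraph (every vertex has even degree), hence an edge-disjoint union of cycles; call this the \emph{odd-subgraph} of $\mathcal{W}$.

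First I would show $H$ is unicyclic. Since each cycle of $G$ has at least $g$ edges and each edge of $H$ is traversed at least once, if the odd-subgraph contained two or more (edge-disjoint) cycles the length would satisfy $i \ge 2g$, contradicting $i \le 2g-2$; hence the odd-subgraph is empty or a single cycle $\zeta$. If it were empty, every edge of $H$ would be traversed an even (hence $\ge 2$) number of times, giving $i \ge 2|E(H)|$ and so $|E(H)| \le g-1 < g$, making $H$ a tree and $\mathcal{W}$ cycle-free, contradicting that $\mathcal{W}$ is a CWWC. Thus the odd-subgraph is exactly one cycle $\zeta$ with $\ell := |E(\zeta)| \ge g$. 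To see $\zeta$ is the \emph{only} cycle of $H$, suppose a second cycle $\zeta' \ne \zeta$ existed. Its $t \ge 1$ edges outside $\zeta$ lie off the odd-subgraph and are each traversed at least twice, while the $\ell$ edges of $\zeta$ are each traversed at least once, so $i \ge \ell + 2t$. On the other hand $\zeta \,\triangle\, \zeta'$ is a nonempty even subgraph and hence contains a cycle, giving $\ell + t - |\zeta\cap\zeta'| = |\zeta \,\triangle\, \zeta'| \ge g$, while $|\zeta'| = |\zeta\cap\zeta'| + t \ge g$; adding these two inequalities yields $\ell + 2t \ge 2g$, so $i \ge 2g$, a contradiction. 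Hence $H$ is unicyclic with unique cycle $\zeta$, and every edge of $H$ off $\zeta$ is a bridge, traversed an even number of times by the closed walk $\mathcal{W}$.

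With $H$ unicyclic, I would extract the decomposition by induction on $i$ using Lemma~\ref{V2L2}. By that lemma there is an edge $e$ traversed twice consecutively, so $\mathcal{W} = \mathcal{W}_1 e e \mathcal{W}_2$ (up to the boundary variants); deleting these two traversals gives a closed walk $\mathcal{W}^{\circ}$ of length $i-2 \le 2g-2$ based at the same node $v$. The estimate above, applied to $\mathcal{W}$, shows $\mathcal{W}^{\circ}$ cannot be cycle-free: otherwise every edge of $\mathcal{W}$ would be traversed an even number of times, forcing the odd-subgraph of $\mathcal{W}$ to be empty and contradicting that $\mathcal{W}$ is a CWWC. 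By Lemma~\ref{V2L1}, $\mathcal{W}^{\circ}$ is therefore either a cycle or a shorter CWWC. If $\mathcal{W}^{\circ}$ is a cycle, it is $\zeta$, and the deleted back-and-forth $v\,u\,v$ on $e$ is a closed cycle-free walk from the node $v$ of $\zeta$ to itself, exactly the claimed form. If $\mathcal{W}^{\circ}$ is a CWWC, the induction hypothesis decomposes it into $\zeta$ together with closed cycle-free walks based at nodes of $\zeta$; since the support of $\mathcal{W}$ is unicyclic, reinserting $ee$ adds no new cycle, so the excursion $v\,u\,v$ simply augments the cycle-free walk attached at the node of $\zeta$ from which $v$ is reached, preserving the decomposition.

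The main obstacle is the bookkeeping in the inductive step rather than any single hard inequality: one must verify that reinserting the excursion $ee$ never produces a second cycle and that the resulting back-and-forth merges cleanly with the cycle-free subgraph hanging off $\zeta$ — including the degenerate case where $e$ is itself a cycle edge, so that the attached cycle-free walk shares an edge with $\zeta$. This is precisely where the unicyclicity of the second paragraph, valid for \emph{every} closed walk of length at most $2g-2$ and hence for $\mathcal{W}$ and all sub-walks met during peeling, does the work, guaranteeing that no step can introduce an independent second cycle. The symmetric-difference length bound is the only nontrivial estimate, and it is what pins the result to the range $i \le 2g-2$.
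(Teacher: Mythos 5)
Your proof is correct, but it reorganizes the argument in a genuinely different way from the paper, and the reordering is worth noting. The paper proves Lemma~\ref{L5} by induction on the walk length with base case $i=g+2$; in the inductive step it peels a doubled edge $ee$ via Lemma~\ref{V2L2}, and the troublesome case in which the peeled walk is cycle-free is refuted by combining Lemma~\ref{V2L3} (every edge of the peeled walk must then appear at least twice) with the existence of a $u$--$v$ path of length at least $g-1$ in its support, forcing $i \geq 2g$. Unicyclicity of the support is only derived \emph{afterwards}, as Lemma~\ref{V2L4}, using Lemma~\ref{L5}. You invert this order: you first prove unicyclicity directly from the parity observation that the edges traversed an odd number of times form an even subgraph, combined with the symmetric-difference inequality --- in effect reproving Lemma~\ref{V2L4} without appeal to Lemma~\ref{L5}, so there is no circularity --- and you then use the same parity fact to dispatch the paper's Case 2 in one line (a cycle-free peeled walk would force the odd-subgraph of $\mathcal{W}$ to be empty, contradicting that it equals $\zeta$). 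Your route buys a cleaner refutation of Case 2, makes Lemma~\ref{V2L4} a free by-product, and knowing $\zeta$ up front lets you identify the cycle produced by the induction hypothesis unambiguously; the paper's route is more elementary, needing only Lemma~\ref{V2L3} and a path argument. One small inaccuracy in your final bookkeeping step: unicyclicity alone does not show that an excursion augmented by $ee$ remains cycle-free --- its support is a subgraph of the support of $\mathcal{W}$, so unicyclicity only says that any cycle it contains would have to be $\zeta$ itself, and excluding that possibility requires a one-line count: each of the at least $g$ edges of $\zeta$ would then be traversed at least twice inside the excursion plus once on the cycle traversal, giving $i \geq 3g > 2g-2$. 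With that line added your argument is complete; note that the paper's own proof is equally terse at the corresponding point (``it is then easy to see that the same also applies to $\mathcal{W}$'').
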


\begin{proof}
We prove the claim by induction on the number of edges of the closed walk. \\
\underline{Basis:} The smallest CWWC $\mathcal{W}$ 
has $g+2$ edges.
By definition, the edge-induced subgraph on the set of edges of $\mathcal{W}$ has a cycle, and on the other hand, $\mathcal{W}$ is not a $(g+2)$-cycle. Thus, $\mathcal{W}$
consists of one $g$-cycle  and a closed cycle-free walk of length $2$ from a node of that $g$-cycle to itself.\\
\underline{Induction step:} Suppose that the claim is true for any closed walk of length $i$ with cycle, where $i \leq 2g-4$. Now, we prove the claim for all $(i+2)$-CWWCs.  
Let $\mathcal{W}$ be such a closed walk. By Lemma \ref{V2L2}, there is an edge $e=uv$ such that $\mathcal{W}=\mathcal{W}'ee\mathcal{W}''$, or $\mathcal{W}= ee\mathcal{W}'$, or $\mathcal{W}=\mathcal{W}'ee$, or $\mathcal{W}=e\mathcal{W}'e$. Remove the two copies of $e$, just described, from $\mathcal{W}$, and call the remaining closed walk $\mathcal{W}_1$.  Now, two cases can be considered:\\
{\bf Case 1.} If the edge-induced subgraph on the set of edges of $\mathcal{W}_1$ has a cycle, then $\mathcal{W}_1$ is a closed walk of length $i$ with cycle. Thus, by the induction hypothesis, 
we know that $\mathcal{W}_1$ consists of one cycle $\zeta$ and some closed cycle-free walks from the nodes of $\zeta$ to themselves. It is then easy to see that the same also applies to $\mathcal{W} $.\\
{\bf Case 2.} Suppose that the edge-induced subgraph on the set of edges of $\mathcal{W}_1$ does not have any cycle.
Since, the edge-induced subgraph on the set of edges of $\mathcal{W}$ has a cycle, we conclude that $e=uv$ is neither in $\mathcal{W}_1$ nor in the edge-induced subgraph on the set of edges of $\mathcal{W}_1 $. There is, however, 
a path between the nodes $v$ and $u$ in the edge-induced subgraph of $\mathcal{W}_1 $. Call this path $\mathcal{P}$. The length of $\mathcal{P}$ is at least $g-1$.
So, $\mathcal{W}_1$ has at least $g-1$ different edges. On the other hand, by Lemma \ref{V2L3}, each edge of $\mathcal{W}_1$ appears at least twice in $\mathcal{W}_1$. Thus, the length  of $\mathcal{W}_1$ is at least $2g-2$, which implies that the length of $\mathcal{W}$ is at least $2g$. But this is a contradiction.  So, this case does not occur.
\end{proof}

If  $\mathcal{W}$ is a CWWC of length $i$ and $i\leq 2g-2$, then it is clear that the edge-induced subgraph on the set of edges of $\mathcal{W}$ does not have two edge-disjoint cycles. 
In the next lemma, we show that the subgraph has exactly one cycle.

\begin{lem}\label{V2L4}
Let $G$ be a bipartite graph with girth $g$. If $\mathcal{W}$ is a CWWC of length $i$ and $i\leq 2g-2$,
then, the edge-induced subgraph on the set of edges of $\mathcal{W}$ has exactly one cycle.
\end{lem}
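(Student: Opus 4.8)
The plan is to argue by contradiction, leaning entirely on the structural description already secured in Lemma~\ref{L5}. By that lemma, the edge-induced subgraph $H$ on the edge set of $\mathcal{W}$ is the union of a single cycle $\zeta$ (traversed once by $\mathcal{W}$) and a forest $F$ made up of the edges of the cycle-free walks attached to the nodes of $\zeta$; the edge sets of $\zeta$ and $F$ are disjoint, and, crucially, by the footnote to Lemma~\ref{L5} every edge of $F$ is traversed an even number of times by $\mathcal{W}$, hence at least twice. Since the claim ``exactly one cycle'' is equivalent to $H$ being unicyclic, I would assume toward a contradiction that $H$ contains a cycle $C' \neq \zeta$ and derive a lower bound on $i$ that violates $i \le 2g-2$.

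The first step is to extract the right substructure from the hypothesis $C' \neq \zeta$. Because the only cycle contained in $\zeta$ alone is $\zeta$ itself while $F$ is acyclic, a second cycle forces the forest to ``shortcut'' $\zeta$: there must exist two distinct nodes $x,y \in V(\zeta)$ joined by a path $Q$ all of whose edges lie in $F$ and whose interior avoids $\zeta$. I would justify this by contracting $\zeta$ to a single node $z$, which leaves a graph of cyclomatic number $\mu(H)-1 \ge 1$; since $F$ is a forest, the resulting cycle cannot avoid $z$, so it corresponds to a path in $F$ between two nodes of $\zeta$, and taking a minimal such path forces $x \neq y$ (otherwise $Q$ would be a cycle inside the forest $F$) and forces the interior of $Q$ to miss $\zeta$. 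This existence step is where I expect the main obstacle to lie, since one must be sure $Q$ is a genuine tree-path with honestly distinct endpoints on $\zeta$ and no hidden return to $\zeta$ in its interior.

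With $Q$ in hand the rest is a short length count. The nodes $x$ and $y$ split $\zeta$ into two arcs of lengths $a_1$ and $a_2$ with $a_1+a_2=|\zeta|$, and because $Q$ is internally disjoint from $\zeta$, each arc closes with $Q$ into a genuine simple cycle of $G$, namely $C_1 = Q \cup \mathrm{arc}_1$ and $C_2 = Q \cup \mathrm{arc}_2$. Girth $g$ gives $|C_1| = |Q|+a_1 \ge g$ and $|C_2| = |Q|+a_2 \ge g$, and adding these yields $2|Q| + |\zeta| \ge 2g$. On the other hand, the length of $\mathcal{W}$ is the sum of its edge multiplicities: the edges of $\zeta$ contribute $|\zeta|$ and the edges of $Q$, being forest edges, each contribute at least $2$, so $i \ge |\zeta| + 2|Q| \ge 2g$. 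This contradicts $i \le 2g-2$.

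Hence no cycle $C' \neq \zeta$ exists and $H$ has exactly one cycle. I would close by noting that this statement strictly strengthens the preceding remark (no two edge-disjoint cycles): the genuinely new content is excluding the theta-type configuration of two cycles that share edges, and it is precisely the inequality $2|Q| + |\zeta| \ge 2g$ — obtained from the two arcs together with the doubled traversal of the forest edges — that rules it out under the bound $i \le 2g-2$.
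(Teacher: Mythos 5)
There is one genuine weak point, and it is exactly where you predicted trouble: your structural premise about $F$. You attribute to Lemma~\ref{L5} two facts that it does not actually provide. First, the claim that the edge sets of $\zeta$ and of the attached cycle-free walks are disjoint is false in general: the paper explicitly notes (right after this lemma, with the example of Fig.~\ref{V2graphB1}(b)) that an attached cycle-free walk may re-traverse edges of $\zeta$. This is harmless, since you can simply define $F$ as the edges of the subgraph $H$ that are \emph{not} in $\zeta$; such an edge is traversed only within cycle-free walks, each of which contributes an even count, so your bound ``each $F$-edge is traversed at least twice'' survives. Second, and more seriously, the assertion that $F$ is a forest is \emph{not} secured by Lemma~\ref{L5}: that lemma says each attached walk is individually cycle-free, not that the union of their edge sets is acyclic. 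A cycle lying inside $F$ would itself be a second cycle of $H$ distinct from $\zeta$ --- precisely one of the configurations the present lemma must rule out --- so invoking forest-ness of $F$ at the two places you do (to force the cycle of the contracted graph through $z$, and to force $x \neq y$ in the minimality argument) is circular as justified.

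The repair is a one-line application of your own counting, so the gap is real but shallow: if $F$ contained a cycle $C''$, then $|C''| \geq g$ and every edge of $C''$ is doubled, giving $i \geq |\zeta| + 2|C''| \geq g + 2g = 3g > 2g-2$; having excluded this, $F$ is a forest and your extraction of $Q$ and the final inequalities $2|Q| + |\zeta| \geq 2g$ and $i \geq |\zeta| + 2|Q|$ go through. Once patched, your proof is essentially the paper's argument in different clothing: the paper takes an arbitrary second cycle $\zeta'$ sharing $\ell \geq 1$ edges with $\zeta$ (the edge-disjoint case having been dismissed as immediate just before the lemma) and combines the girth bound $L(\zeta) + L(\zeta') - 2\ell \geq g$ on the symmetric difference with the traversal bound $i \geq L(\zeta) + 2\bigl(L(\zeta') - \ell\bigr)$ and $L(\zeta') \geq g$ to get $i \geq 2g$. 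In your theta configuration, with $\zeta' = Q \cup \mathrm{arc}_1$ and $\ell = a_1$, these are literally your inequalities $|Q| + a_2 \geq g$, $|Q| + a_1 \geq g$ and $i \geq |\zeta| + 2|Q|$. The paper's formulation never needs to exhibit $Q$ explicitly, which is exactly how it sidesteps the contraction step and the forest issue on which your write-up stumbles.
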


\begin{proof}
By Lemma \ref{L5}, we know that $\mathcal{W}$ consists of one cycle $\zeta$ and some closed cycle-free walks from the nodes of $\zeta$.
Now, we prove that the cycle $\zeta$ is the only cycle in  the edge-induced subgraph on the set of edges of $\mathcal{W}$. To the contrary, assume that the edge-induced subgraph on the set of edges of $\mathcal{W}$
has another cycle $\zeta'$ such that $\zeta$ and $\zeta'$ share 
$\ell \geq 1$ edges. Denote the length of cycle $\zeta$ ($\zeta'$) by $L( \zeta )$ ($L( \zeta')$). 
Since the union of two cycles minus their shared edges contains at least one cycle, and since the girth of the
graph $G$ is $g$, we have:
\begin{equation}\label{V2E1}
L( \zeta )+ L( \zeta ')-2\ell \geq g
\end{equation}
On the other hand, by  
Lemma \ref{L5}, the closed walk $\mathcal{W}$ visits every edge of $\zeta'$ which is not in $\zeta$ at least twice. We therefore have:
\begin{equation}\label{V2E2}
i\geq L( \zeta )+ 2(L( \zeta ')-\ell )
\end{equation}
By combining (\ref{V2E1}), (\ref{V2E2}), and the fact that $L( \zeta')\geq g$, we obtain $i \geq 2g$. But this is a contradiction.
\end{proof}

We note that the closed cycle-free walks that start from the nodes of cycle $\zeta$, as described in Lemma~\ref{L5}, can have some edges in common with $\zeta$. For instance, in the CWWC of length $10$ shown in Fig. \ref{V2graphB1}(b), the closed cycle-free walk of length $2$ from node $u$ traverses twice through one of the edges of the $6$-cycle. The following result, whose proof is simple, shows that at least one of the edges of ${\zeta}$ appears only once in the closed walk with cycle.

\begin{lem}\label{L6}
If $G$ is a  bipartite graph with girth $g$, then for each $i$,  $g+2 \leq i \leq 2g-2$,
every closed walk $\mathcal{W}$ of length $i$ with cycle has at least one edge that appears only once in $\mathcal{W}$.
\end{lem}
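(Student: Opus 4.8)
The plan is to reduce the claim to a statement about the unique cycle $\zeta$ supplied by the earlier structural lemmas and then finish with a one-line counting argument. First I would invoke Lemma~\ref{L5} and Lemma~\ref{V2L4}: since $\mathcal{W}$ is an $i$-CWWC with $i \le 2g-2$, its edge-induced subgraph contains exactly one cycle $\zeta$, and $\mathcal{W}$ is precisely $\zeta$ together with some closed cycle-free walks hanging off the nodes of $\zeta$. Because $\zeta$ is a cycle in a graph of girth $g$, its length satisfies $L(\zeta) \ge g$.

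Next I would make the key observation that any edge appearing exactly once must lie on $\zeta$. Indeed, by the interpretation of Lemma~\ref{L5} spelled out in its footnote, every edge belonging to one of the attached cycle-free subgraphs $\mathcal{T}$ is traversed an even number of times, hence at least twice. So an edge occurring only once in $\mathcal{W}$ cannot be one of these tree edges; it must be an edge of $\zeta$. The lemma is therefore equivalent to showing that at least one edge of $\zeta$ is traversed exactly once.

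I would then finish by contradiction. Suppose instead that each of the $L(\zeta)$ edges of $\zeta$ is traversed at least twice in $\mathcal{W}$. The length $i$ equals the total number of edge traversals of $\mathcal{W}$, which is at least the number of traversals of the edges of $\zeta$ alone; under the hypothesis this is at least $2L(\zeta)$. Hence $i \ge 2L(\zeta) \ge 2g$, contradicting $i \le 2g-2$. Thus some edge of $\zeta$ is traversed exactly once, which proves the lemma.

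There is no genuine obstacle once the decomposition of Lemmas~\ref{L5} and~\ref{V2L4} is in hand; the only point requiring care is the bookkeeping in the final step, namely that summing traversal counts over just the edges of $\zeta$ yields a valid lower bound on the total length $i$, so that ``every edge of $\zeta$ appears at least twice'' immediately forces $i \ge 2L(\zeta) \ge 2g$. This is precisely why the constraint $i \le 2g-2$, rather than $i \le 2g$, is what makes the statement true.
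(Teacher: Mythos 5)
Your proof is correct. Note that the paper does not actually prove Lemma~\ref{L6}; it only remarks that the proof is simple, so your write-up supplies the omitted argument, and in fact it establishes the slightly stronger claim stated in the paper's surrounding prose, namely that the once-traversed edge can be taken to lie on the cycle $\zeta$. Your appeal to Lemmas~\ref{L5} and~\ref{V2L4} is legitimate (both precede Lemma~\ref{L6} and their proofs do not depend on it), but it is heavier machinery than the lemma as stated requires: by the definition of a CWWC, the edge-induced subgraph of $\mathcal{W}$ contains a cycle, hence $\mathcal{W}$ has at least $g$ distinct edges; if every edge appeared at least twice, summing multiplicities over the distinct edges would give $i \geq 2g$, contradicting $i \leq 2g-2$. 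That two-line version needs neither the uniqueness of $\zeta$ nor the parity property from the footnote of Lemma~\ref{L5}. Your intermediate step~2 (an edge traversed exactly once must be an edge of $\zeta$) is likewise unnecessary for the bare existence statement --- it matters only for the sharper localization to $\zeta$ --- and there it is correctly justified: an edge not on $\zeta$ is traversed only within attached cycle-free subgraphs, hence an even number of times and at least twice. Finally, the bookkeeping you flag at the end is indeed sound: the length $i$ is the total number of edge traversals, so restricting the sum of multiplicities to the edges of $\zeta$ gives a valid lower bound, and every edge of $\zeta$ is traversed at least once since $\zeta$ is part of the decomposition of $\mathcal{W}$, so ``not at least twice'' does mean ``exactly once.''
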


The next result follows from Lemma~\ref{L6} by choosing the edge $e$ that appears exactly once in $\mathcal{W}$ as the $j$-th edge of $\mathcal{W}$, for any $j$ in the range $1 \leq j \leq g+k$, combined with 
the two directions that can be selected for traversing the edges of $\mathcal{W}$.

\begin{lem}\label{V2L5}
Consider a bipartite graph $G$ with girth $g$, and let $k< g$. We can then divide $\Psi_{g+k}(d_v,d_c,G)$ by $2(g+k)$ to obtain the number of 
CWWCs of length $(g+k)$ irrespective of the direction and the starting edge of the closed walk.
\end{lem}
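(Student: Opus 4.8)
The plan is to read $\Psi_{g+k}(d_v,d_c,G)$ as a count of \emph{ordered} closed walks---edge sequences with a distinguished starting edge and a chosen direction, as fixed by the counting convention stated before Theorem~\ref{L1}---and to show that each CWWC, regarded as an unrooted, undirected object, is represented by \emph{exactly} $2(g+k)$ such ordered walks. Changing the starting edge corresponds to a cyclic rotation and reversing the direction to a reflection, so the dihedral group $D_{g+k}$ of order $2(g+k)$ acts on the ordered representatives of a fixed CWWC $\mathcal{W}$, and the number of distinct ordered walks it produces equals $2(g+k)/|\mathrm{Stab}(\mathcal{W})|$. Hence dividing by $2(g+k)$ recovers the number of CWWCs (the number of $D_{g+k}$-orbits) precisely when every such walk has trivial stabilizer, i.e. when no nontrivial rotation or reflection of its cyclic edge sequence reproduces the walk. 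Establishing this freeness is the entire content of the lemma. I would also record at the outset that, since all cycles in a bipartite graph are even and a closed walk there has even length, $g+k$ is even; this keeps the parity of positions around the cyclic sequence consistent, which is needed below.

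Next I would bring in the structural input. Because $g+2\le g+k\le 2g-2$, Lemma~\ref{L6} guarantees an edge $e$ that is traversed exactly once in $\mathcal{W}$. Writing $\mathcal{W}$ as a cyclic node sequence $w_1,\dots,w_{g+k}$ and letting $e=\{w_{j_0},w_{j_0+1}\}$ be this unique occurrence, any $\sigma\in D_{g+k}$ fixing $\mathcal{W}$ must carry the position of $e$ to a position carrying the same edge; by uniqueness it must map the consecutive pair $\{j_0,j_0+1\}$ to itself. A nontrivial rotation moves every edge position and is excluded at once. A reflection can preserve an adjacent pair only by swapping its two elements (for $g+k\ge 3$ no reflection fixes two adjacent positions), and the only reflection swapping $w_{j_0}$ and $w_{j_0+1}$ is the ``edge-axis'' reflection whose axis passes through the midpoint of $e$. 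Thus the unique edge eliminates all rotations and all vertex-axis reflections, leaving a single candidate nontrivial symmetry.

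The final and crucial step is to discard that edge-axis reflection using bipartiteness. It acts on positions by $i\mapsto (2j_0+1)-i \pmod{g+k}$, sending each position to one of the opposite parity. Since adjacent nodes of a walk lie on opposite sides, the nodes at odd and even positions fall in the two disjoint classes $U$ and $W$; the reflection would therefore force the node at position $i$ to equal a node at position $(2j_0+1)-i$ on the \emph{other} side of the bipartition, which is impossible. This contradiction gives $\mathrm{Stab}(\mathcal{W})=\{\mathrm{id}\}$, so $\mathcal{W}$ has exactly $2(g+k)$ ordered representatives and the division is justified. The main obstacle is exactly this interplay: Lemma~\ref{L6} is enough to kill every rotation and every vertex-axis reflection, but it leaves one edge-axis reflection, and it is the even length forced by bipartiteness---so that the two endpoints of $e$ sit in different color classes---that removes this last symmetry.
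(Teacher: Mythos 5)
Your proposal is correct and takes essentially the same route as the paper, which justifies the lemma in one sentence via Lemma~\ref{L6}: the once-traversed edge $e$ can be placed as the $j$-th edge for any $1\leq j\leq g+k$ in either direction, which is exactly your observation that the unique edge forces each CWWC to have $2(g+k)$ distinct rooted, directed representatives. If anything, your dihedral-stabilizer formalization is more complete than the paper's brief argument, since uniqueness of $e$ by itself kills only the rotations and the reflections moving $e$'s position but not the edge-axis (palindromic) reflection, and your bipartite parity argument is precisely the step that eliminates that last symmetry (for a CWWC the edge sequence does determine the node sequence, as the walk contains consecutive distinct edges, so identifying walks by edge sequences causes no trouble there).
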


The general approach that we use to calculate $\Psi_{g+k}(d_v,d_c,G)$ is based on employing Lemmas \ref{L5} and \ref{V2L4} to count different types 
of CWWCs of a certain length irrespective of the direction or the starting edge of the walk. We then use Lemma~\ref{V2L5} to account for the direction and the starting edge.
Suppose that we are interested in counting the CWWCs of length $i$ that consist of a cycle of length $g+k$ and some closed cycle-free walks from the nodes of that cycle, where
$i \leq 2g-2$, and $k< g-2$.
By Lemmas \ref{L5} and \ref{V2L4}, in this case, each CWWC
consists of a $(g+k)$-cycle and some closed cycle-free walks $\mathcal{W}_1,\ldots, \mathcal{W}_j $ from the nodes of the cycle, such that $g+k + \sum_{l=1}^{j} |\mathcal{W}_l|= i$. 
Considering that the length of each closed walk is even, we thus have  $1 \leq j \leq \frac{i-g-k}{2}$.  
To count the CWWCs under consideration, we need to partition the number $\frac{i-g-k}{2}$ into $j$ positive integer numbers (with $j$ in the above range), where each integer number represents half of the length of one of the closed cycle-free walks. 
The number of ways this partitioning can be performed determines the number of possibilities for the lengths of $j$ closed cycle-free walks. In the following, corresponding to each partitioning, we 
identify a {\em category} of CWWCs, i.e., in a given category, the lengths of closed cycle-free walks are fixed. Within each category, we then identify all the possibilities that closed cycle-free walks with the given lengths can be attached to the nodes of the $(g+k)$-cycle. 

An {\it integer partition} of a positive integer $n$ is defined as a way of describing $n$ as a sum of positive integers.
For example, the integer number $4$ has five integer partitions: $4$, $3 + 1$, $ 2 + 2$, $2 + 1 + 1$, and $1 + 1 + 1 + 1$. 
The number of integer partitions of $n$ is given by the {\em partition function} $p(n)$. For the example just given, $p(4) = 5$. 
An asymptotic expression for $p(n)$ is given by \cite{MR1634067}
\begin{equation}\label{E12}
p(n) \sim \frac {1} {4n\sqrt3} \exp\Big({\pi \sqrt {\frac{2n}{3}}}\Big)\:.
\end{equation}
In this work, however, we are interested in relatively short closed walks with cycles, where the length of the walk is at most $2g-2$.

\subsection{Calculation of $N_{g+2}$}
\label{sub234}
To use Theorem \ref{L1} for the calculation of $N_{g+2}$, we need to calculate $\Psi_{g+2}(d_v,d_c,G)$.

\begin{theo}\label{L2}
For a $(d_v, d_c)$-regular bipartite graph $G$ with girth $g$, we have 
$$\Psi_{g+2}(d_v,d_c,G)=N_g\times \Big(\frac{g}{2}(d_v+d_c)-g\Big)\times 2(g+2)\:.$$
\end{theo}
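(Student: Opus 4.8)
The plan is to compute $\Psi_{g+2}(d_v,d_c,G)$ directly as a count of closed walks with cycle (CWWCs) of length $g+2$, using the structural Lemmas~\ref{L5} and~\ref{V2L4} to describe their shape, counting them irrespective of direction and starting edge, and finally converting that structural count into $\Psi_{g+2}$ via Lemma~\ref{V2L5}.

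First I would pin down the shape of a $(g+2)$-CWWC. By Lemma~\ref{V2L4} its edge-induced subgraph has exactly one cycle $\zeta$, and by Lemma~\ref{L5} the walk consists of $\zeta$ together with some closed cycle-free walks attached at the nodes of $\zeta$, of total length $g+2-L(\zeta)$. Since a $(g+2)$-cycle is a cycle (not a CWWC) and the girth is $g$, the only possibility is $L(\zeta)=g$ together with a single closed cycle-free walk of length $2$ attached at one node of $\zeta$. A length-$2$ closed cycle-free walk from a node $v$ is exactly a choice of an edge incident to $v$, traversed out and back. Hence, irrespective of direction and starting edge, a $(g+2)$-CWWC is determined by a $g$-cycle $\zeta$ together with the choice of one such incident edge.

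Next I would count, for a fixed $g$-cycle $\zeta$, the number of admissible attachments. Since $G$ is bipartite and $(d_v,d_c)$-regular, the nodes of $\zeta$ alternate between the two sides, so $\zeta$ has $g/2$ nodes of degree $d_v$ and $g/2$ nodes of degree $d_c$, giving $\sum_{v\in\zeta} d(v)=\frac{g}{2}(d_v+d_c)$ node--incident-edge pairs. This naive count over-counts, and resolving the over-count is the crux of the argument. I would split the incident edges into those off $\zeta$ and those on $\zeta$. An edge incident to a node of $\zeta$ but not in $\zeta$ must have its other endpoint outside $\zeta$, because a chord would close the shorter arc of $\zeta$ into a cycle of length at most $\tfrac{g}{2}+1<g$, contradicting the girth; thus each off-cycle incident edge yields a distinct CWWC with no double counting. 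By contrast, each of the $g$ edges of $\zeta$ is incident to two of its nodes, and the out-and-back traversal of such an edge produces the \emph{same} closed walk (up to direction and starting edge) from either endpoint; hence each cycle edge appears twice in $\frac{g}{2}(d_v+d_c)$ but corresponds to a single CWWC. Collapsing these $2g$ cycle-edge incidences down to $g$ distinct walks gives $\frac{g}{2}(d_v+d_c)-g$ distinct $(g+2)$-CWWCs per cycle.

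Finally I would sum over all $N_g$ cycles to obtain $N_g\big(\frac{g}{2}(d_v+d_c)-g\big)$ CWWCs irrespective of direction and starting edge, and then invoke Lemma~\ref{V2L5} (with $k=2$) to multiply by $2(g+2)$, yielding the claimed value of $\Psi_{g+2}(d_v,d_c,G)$. The main obstacle is precisely the over-counting correction in the third step: one must establish the no-chord property of $g$-cycles in a girth-$g$ bipartite graph and recognize that an out-and-back traversal along a shared cycle edge is the same object from either endpoint, which together account for the $-g$ term. As a consistency check I would verify the formula against $K_{x,x}$ from Example~\ref{Example1}, where together with $\Omega_6$ it must reproduce $N_6=x^2(x-1)^2(x-2)^2/6$.
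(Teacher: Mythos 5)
Your proposal is correct and follows essentially the same route as the paper's proof: Lemma~\ref{L5} pins a $(g+2)$-CWWC down to a $g$-cycle plus one length-$2$ cycle-free walk, the count is done up to rotation and direction, and Lemma~\ref{V2L5} supplies the factor $2(g+2)$. Your split into off-cycle and on-cycle incident edges is exactly the paper's Subcategories 1.1 and 1.2 (it counts $\frac{g}{2}(d_v-2)+\frac{g}{2}(d_c-2)$ plus $g$ directly, while you count $\frac{g}{2}(d_v+d_c)$ incidences and collapse the $2g$ cycle-edge incidences to $g$ --- algebraically identical, with your no-chord and same-walk-from-either-endpoint observations making explicit two points the paper leaves implicit).
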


\begin{proof}
We first focus on counting CWWCs of length $g+2$ irrespective of their starting edge or direction. By Lemma \ref{V2L5}, we then need to multiply the obtained value by $2(g+2)$ to take into account the different starting edges and directions.
By Lemma \ref{L5},  every $(g+2)$-CWWC consists of a cycle of length $g$ and some closed cycle-free walks. Since for $i=g+2$ and $k=0$, we have $\frac{i-g-k}{2}=1$, and since $p(1)=1$,
there is only one possibility for the lengths of closed cycle-free walks, i.e., there is only one closed cycle-free walk of length $2$ connected to one of the nodes of a $g$-cycle. This corresponds to a single category of CWWCs.
In the following, we partition this single category of $(g+2)$-CWWCs into two subcategories: {\bf 1.1} In this subcategory, the set of edges of the CWWC consists of the edges of a $g$-cycle and one extra edge that is incident to one node of the $g$-cycle. See Fig. \ref{graphA2}(a),  for an example. {\bf 1.2} In this subcategory, the set of edges of the CWWC consists of the edges of a $g$-cycle, i.e., the closed cycle-free walk of length $2$ is connected to one of the nodes of the cycle and traverses one of the edges of the cycle. See Fig. \ref{graphA2}(b), for an example.

To find a $(g+2)$-CWWC in Subcategory 1.1, first, we need to choose a cycle of length $g$ (the graph has $N_g$ cycles of length $g$). Then, we need to choose a node $v$ from the cycle (the cycle has $g/2$ variable nodes and $g/2$ check nodes). Finally, we need to choose an edge which is incident to $v$ and is not in the cycle (each variable node has $d_v-2$ such edges, and each check node has $d_c-2$ such edges).  Consequently, the total number of $(g+2)$-CWWCs in Subcategory 1.1 
is equal to $N_g \times [\frac{g}{2}(d_v-2)+\frac{g}{2}(d_c-2)] = N_g \times [\frac{g}{2}(d_v+d_c)-2g]$. 

In order to find a $(g+2)$-CWWC in Subcategory 1.2, we need to choose a cycle (i.e., $N_g$ options), and an edge from that cycle (i.e., $g$ options).  This amounts to $N_g \times g $ choices. We thus have $\Psi_{g+2}(d_v,d_c,G)=N_g\times [\frac{g}{2}(d_v+d_c)-g] \times 2(g+2)$. 
\end{proof}

\begin{figure}[ht]
\begin{center}
\includegraphics[scale=.45]{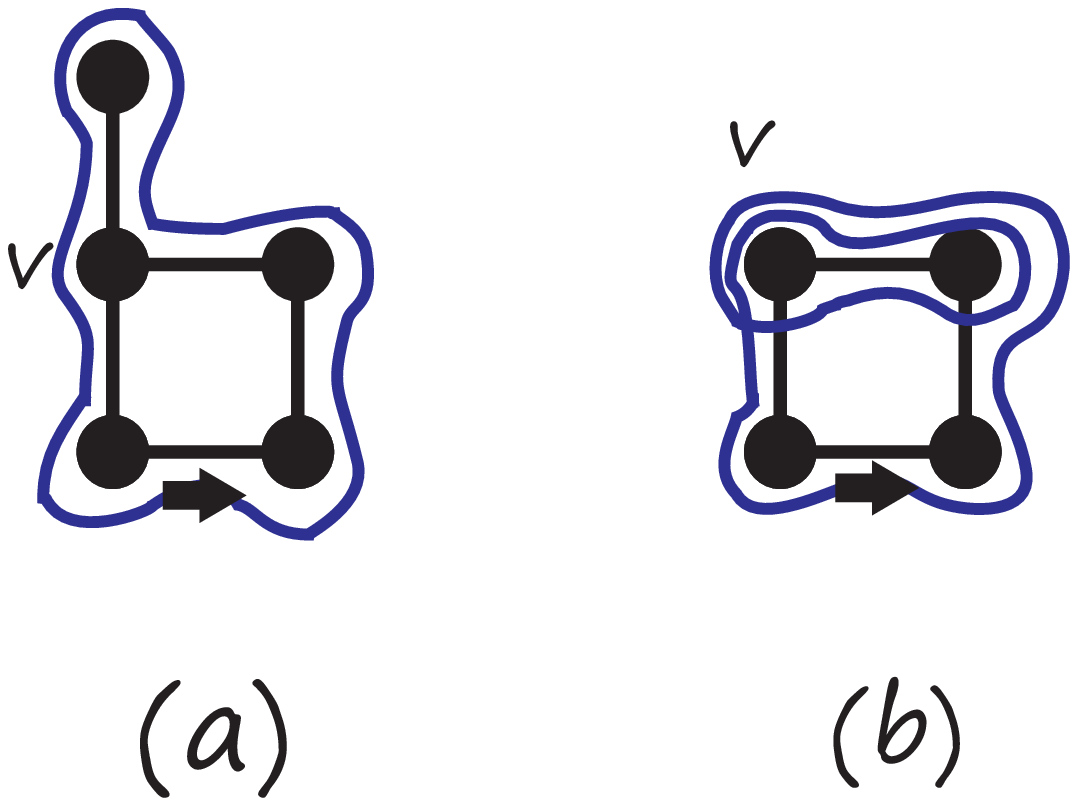}
\caption{(a) A $6$-CWWC in Subcategory 1.1, (b) A $6$-CWWC in Subcategory 1.2.
} \label{graphA2}
\end{center}
\end{figure}

\begin{ex}\label{Example2}
Consider the complete bipartite graph $K_{x,x}$. Using combinatorial arguments, in Example \ref{Example1}, we showed that for this graph, $N_6=x^2(x-1)^2(x-2)^2/6$.
Now, we use Theorems \ref{L1} and \ref{L2}, to calculate the number of $6$-cycles.
We have $\sum_j \lambda_j^6 = tr ({A(K_{x,x})^6})=2x^6$. From Table \ref{T1}, we obtain $S_{x,x,6}= x(x^2 + 2x(x-1) + 2(x-1)^2 )$.
Thus, by (\ref{E2}), we have $\Omega_{6}(x,x,G)=2x\times x(x^2 + 2x(x-1) + 2(x-1)^2 )$.
Also, $N_g\times [\frac{g}{2}(d_v+d_c)-g]\times 2(g+2)= \frac{x^2(x-1)^2}{4}(2(2x)-4)\times 12$.
Consequently, we have $N_6=x^2(x-1)^2(x-2)^2/6$.
\end{ex}

\subsection{Calculation of $N_{g+4}$}
In the following, we calculate $\Psi_{g+4}(d_v,d_c,G)$. This together with Theorem~\ref{L1} and (\ref{E2}) are then used to compute $N_{g+4}$.

\begin{theo}\label{L3}
For any $(d_v, d_c)$-regular bipartite graph $G$ with girth $g$ at least six, we have
\begin{align*}
\dfrac{\Psi_{g+4}(d_v,d_c,G)}{2(g+4)}&= N_{g+2}\times [\frac{g+2}{2}(d_v+d_c)-(g+2)]\\
                       &+ N_g \times [\frac{g}{2}(d_v-2)(d_c-1)+\frac{g}{2}(d_c-2)(d_v-1)]\\
                       &+ N_g \times \Big( [{{\frac{g}{2}} \choose 2}+\frac{g}{2}](d_v-2)^2+ [{{\frac{g}{2}} \choose 2}+\frac{g}{2}](d_c-2)^2 + (\frac{g}{2})^2(d_v-2)(d_c-2) \Big)\\
                       &+ N_g \times \Big({g\choose 2}+2g + (g+2)\times (\frac{g}{2}(d_v-2)+\frac{g}{2}(d_c-2))\Big)\:,
\end{align*}
where $N_g$ and $N_{g+2}$ are the number of cycles of length $g$ and $g+2$ in $G$, respectively.
\end{theo}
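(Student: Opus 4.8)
The plan is to mirror the structure of the proof of Theorem~\ref{L2}: by Lemma~\ref{V2L5} it suffices to count the $(g+4)$-CWWCs \emph{irrespective} of their starting edge and direction, and then multiply by $2(g+4)$. By Lemmas~\ref{L5} and~\ref{V2L4}, every such CWWC is built from a single cycle $\zeta$ together with closed cycle-free excursions attached at the nodes of $\zeta$, whose lengths sum to $g+4-|\zeta|$. Since $|\zeta|$ is even and at least $g$, while the total excursion length is a positive even number at most $4$, the only possibilities are $|\zeta|=g+2$ (total excursion length $2$) and $|\zeta|=g$ (total excursion length $4$); these two cases will produce, respectively, the $N_{g+2}$-term (first line) and the three $N_g$-terms.

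First I would dispose of the case $|\zeta|=g+2$. Here exactly one closed cycle-free walk of length $2$ is attached to a $(g+2)$-cycle, which is precisely the situation analysed in the proof of Theorem~\ref{L2} with $g$ replaced by $g+2$. Reusing that argument verbatim (an extra pendant edge contributes $\frac{g+2}{2}(d_v-2)+\frac{g+2}{2}(d_c-2)$ per cycle, and a length-$2$ walk retracing a cycle edge contributes $g+2$ per cycle) gives the first line, $N_{g+2}\big(\frac{g+2}{2}(d_v+d_c)-(g+2)\big)$. For this to be legitimate I need $g+4\le 2g-2$, i.e.\ $g\ge 6$, which is exactly the hypothesis; the same bound, via Lemma~\ref{V2L4}, guarantees that an attached pendant edge cannot reconnect to $\zeta$ (a reconnection would force either a chord and hence a second cycle, or a cycle shorter than $g$), so the degree counts are exact.

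The heart of the proof is the case $|\zeta|=g$, where excursions of total length $4$ hang off a $g$-cycle. I would organize the enumeration not by the integer partition of the excursion length (which over-counts, since two length-$2$ excursions based at one node coincide with a single length-$4$ excursion there) but directly by the \emph{multiset of edges carrying the four extra traversals}, and for each such multiset by the number of distinct closed-walk classes realizing it. Three families arise. (i) All four extra traversals lie on \emph{new} (pendant) edges forming a path of length $2$ out of a cycle node, $v,a,b,a,v$; this yields $\frac{g}{2}(d_v-2)(d_c-1)+\frac{g}{2}(d_c-2)(d_v-1)$ (second line). (ii) The extra traversals split between two pendant edges (possibly the same edge traversed four times), giving the third line: for two pendant edges at a \emph{single} node the two traversal orders produce distinct classes, so the count is the \emph{ordered} $(d_v-2)^2$ (resp.\ $(d_c-2)^2$) rather than $\binom{d_v-1}{2}$, while two pendant edges at two \emph{different} nodes give $\binom{g/2}{2}(d_v-2)^2$, etc., and the identity $\binom{g/2}{2}+\frac{g}{2}=\binom{g/2+1}{2}$ assembles the stated coefficients. (iii) At least one extra traversal re-uses a cycle edge (fourth line): the purely cycle-edge configurations depend only on which cycle edges receive the extra pairs of traversals---one edge getting $+4$ ($g$ classes), two adjacent edges getting $+2$ each ($2$ classes per adjacent pair, hence $2g$), and two non-adjacent edges getting $+2$ each ($\binom{g}{2}-g$ classes)---summing to $\binom{g}{2}+2g$; while the configurations using one pendant edge together with one reused cycle edge contribute, for each of the $\frac{g}{2}(d_v-2)+\frac{g}{2}(d_c-2)$ pendant edges, a factor $g+2$ ($g-2$ classes when the reused edge is disjoint from the pendant's base node, plus $2$ classes for each of the two cycle edges incident to that node).

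I expect the main obstacle to be the bookkeeping in family (iii), together with the same-node ordering subtlety in (ii). The crucial and easily missed point is that the class of a CWWC is determined by its full cyclic edge-sequence up to reflection, not merely by its edge-induced subgraph or by the traversal multiplicities: two configurations with identical edge multisets (e.g.\ two round-trips on adjacent cycle edges realized as $c_1^3c_2^3$ versus $c_1c_2c_2c_1c_1c_2$) are genuinely different closed walks. Getting the counts ``$2$ classes per adjacent pair'' and ``$g+2$ per pendant edge'' right therefore requires checking, by writing out the edge-sequences explicitly, which seemingly different insertions of the detours collapse under reflection and which do not; reversal flips the entire cycle, so it does not merely transpose two detours sitting at a common node, which is exactly why same-node contributions are counted with ordered rather than unordered multiplicity. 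Finally, I would add the three $N_g$-terms to the $N_{g+2}$-term and multiply by $2(g+4)$ to obtain the claimed expression for $\Psi_{g+4}(d_v,d_c,G)$.
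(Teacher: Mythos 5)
Your proposal is correct and takes essentially the same route as the paper's proof: the same reduction via Lemmas~\ref{L5}, \ref{V2L4} and~\ref{V2L5} to counting $(g+4)$-CWWCs up to starting edge and direction, the same split into the $(g+2)$-cycle case (reusing Theorem~\ref{L2}'s argument) and the $g$-cycle case, and the same multiplication by $2(g+4)$ at the end. Your three families merely regroup the paper's eleven subcategories (your family (i) is the paper's 1.1, family (ii) combines 1.2, 1.3 and 2.1 via the ordered count $(d_v-2)^2=(d_v-2)+(d_v-2)(d_v-3)$, and family (iii) combines 1.4--1.8, 2.2 and 2.3 into $\binom{g}{2}+2g+(g+2)\big(\frac{g}{2}(d_v-2)+\frac{g}{2}(d_c-2)\big)$), and every class count, including the reflection subtleties you flag, agrees with the paper's term by term.
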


\begin{proof}
In the following, we prove that the right hand side of the above equation calculates the number of CWWCs of length $g+4$ not taking into account the starting edge or the direction of the walks.
Since the girth of the graph is at least six, by Lemma \ref{L5}, every $(g+4)$-CWWC consists of one cycle and some closed cycle-free walks. 
Depending on the length of the cycle, two cases are possible: \\
{\bf Case 1.} The closed walk consists of  a $(g+2)$-cycle and some closed cycle-free walks. In this case, similar to the case of Theorem \ref{L2}, there is one category of CWWCs with two subcategories: (i) The set of the edges of the closed walk consists of the edges of a $(g+2)$-cycle and one extra edge that is incident to one node of the $(g+2)$-cycle. (ii) The set of the edges of the closed walk is the same as the set of the edges of a $(g+2)$-cycle. Similar to Theorem \ref{L2}, the total number of $(g+4)$-CWWCs in this case is equal to $N_{g+2}\times [\frac{g+2}{2}(d_v+d_c)-(g+2)]$.\\
{\bf Case 2.} The closed walk consists of a $g$-cycle and some closed cycle-free walks. For this case, $p(2)=2$, and there are two categories of
$(g+4)$-CWWCs. In the first category, each CWWC consists of one $g$-cycle $\zeta$ and one closed cycle-free walk of length $4$ from one of the nodes of $\zeta$. In the second category, each CWWC consists of one $g$-cycle  $\zeta$ and two closed cycle-free walks, each of length $2$, from two nodes of $\zeta$. The CWWCs in the first and second categories can be partitioned into eight and three subcategories, respectively. See, Figures \ref{graphA4} 
and \ref{graphA3}, respectively. In the following, we count the number of CWWCs in each subcategory of each category by referring to the structure of corresponding CWWCs as shown in Figures \ref{graphA4} and \ref{graphA3}.

\begin{figure}[ht]
\begin{center}
\includegraphics[scale=.45]{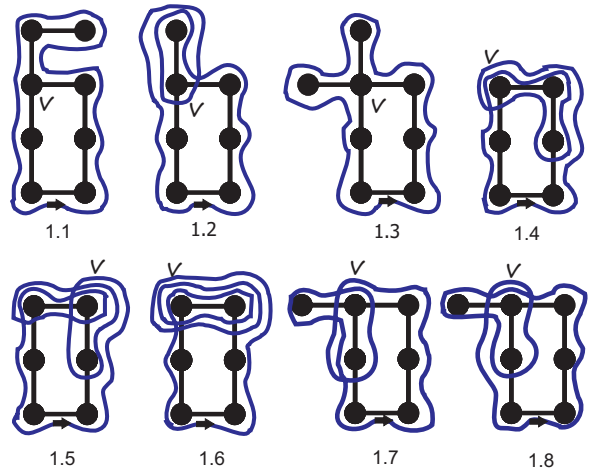}
\caption{Different subcategories of $(g+4)$-CWWCs in Category 1 of Case 2.
}
\label{graphA4}
\end{center}
\end{figure}

\begin{figure}[ht]
\begin{center}
\includegraphics[scale=.45]{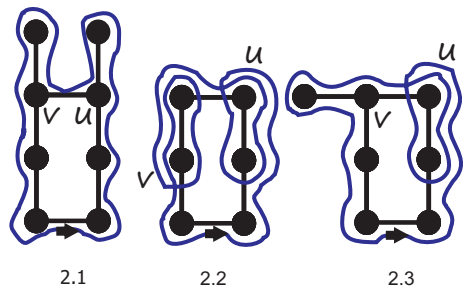}
\caption{Different subcategories of $(g+4)$-CWWCs in Category 2 of Case 2.
}
\label{graphA3}
\end{center}
\end{figure}

\underline{Category 1:}\\ 
{\bf 1.1} To find a $(g+4)$-CWWC in this subcategory, first, we need to choose a cycle of length $g$ (the graph has $N_{g}$ cycles of length $g$). Then we choose a node $v$ from the cycle (the cycle has $\frac{g}{2}$ variable nodes and $\frac{g}{2}$ check nodes). Next, we choose an edge which is incident to $v$ and is not in the cycle (each variable node has $d_v-2$ such edges and each check node has $d_c-2$ such edges). Finally, we choose an edge 
which shares a node with the edge that we picked in the previous step. Consequently, the total number of $(g+4)$-CWWCs in this subcategory is equal to $N_g \times [\frac{g}{2}(d_v-2)(d_c-1)+\frac{g}{2}(d_c-2)(d_v-1)]$.\\
{\bf 1.2} It is easy to see that the multiplicity of CWWCs in this subcategory is equal to  $N_{g}\times [\frac{g}{2}(d_v-2) +\frac{g}{2}(d_c-2) ]$.\\
{\bf 1.3} For this subcategory, the number of CWWCs is given by $$N_{g}\times \Big(\frac{g}{2}(d_v-2)(d_v-3) +\frac{g}{2}(d_c-2)(d_c-3)\Big).$$
{\bf 1.4} To find a CWWC in this subcategory, we need to choose a cycle of length $g$, first ($N_g$ possibilities), and then choose two adjacent edges from that cycle ($g$ possibilities). The number of CWWCs in this subcategory is thus equal to $N_g \times g$. \\
{\bf 1.5} Similar to the previous subcategory, for this one also the multiplicity of CWWCs is equal to  $N_g \times g$. \\
{\bf 1.6} The multiplicity for this subcategory is also given by $N_g \times g$.\\
{\bf 1.7} To find a CWWC in this subcategory, we need to first choose a $g$-cycle ($N_g $ possibilities). Then we need to choose a  node $v$ from that cycle and also an edge  connected to $v$ that is not part of the cycle 
($\frac{g}{2}(d_v-2)+\frac{g}{2}(d_c-2)$ possibilities). Finally, we need to choose one of the two edges which are incident to $v$ and are in the selected cycle ($2$ possibilities). The total number of CWWCs in this subcategory is thus 
$N_g \times  (\frac{g}{2}(d_v-2)+\frac{g}{2}(d_c-2))\times 2$.\\
{\bf 1.8} Similar to Subcategory 1.7, the number of CWWCs in this subcategory is $N_g \times  (\frac{g}{2}(d_v-2)+\frac{g}{2}(d_c-2))\times 2$.\\

\underline{Category 2:} \\
{\bf 2.1} To find a CWWC in this subcategory, first, we choose a cycle of length $g$. Then, we choose two nodes $v$ and $u$ from the cycle (these two nodes can be both variable nodes, both check nodes, or one variable node and 
one check node). Finally, for each selected node, we choose an edge which is incident to that node and is not in the cycle. Thus, the total number of CWWCs in this subcategory is equal to $$N_g \times \Big( {{g/2} \choose 2}(d_v-2)^2+ {{g/2} \choose 2}(d_c-2)^2 + (\frac{g}{2})^2(d_v-2)(d_c-2) \Big).$$
{\bf 2.2} For a CWWC in this subcategory, we choose a $g$-cycle and two edges from that cycle that are not incident. The multiplicity is thus $$N_g \times [{{g}\choose 2}-g].$$\\
{\bf 2.3} In this case, we first choose a $g$-cycle. Then, we choose a node $v$ and an edge from that cycle such that the selected edge is not incident to $v$ ($g-2$ possibilities). Finally, we choose an edge incident to $v$ which does not belong to the selected $g$-cycle.  The multiplicity in this case  is thus $N_g \times (g-2)\times (\frac{g}{2}(d_v-2)+\frac{g}{2}(d_c-2))$.

Adding up the multiplicities derived above, we obtain the total multiplicity of $(g+4)$-CWWCs given in the theorem.
\end{proof}

\subsection{Calculation of $N_{i}$ for $g+6 \leq i \leq 2g-2$}

To compute $N_i$, for $g+6 \leq i \leq 2g-2$, using Theorem~\ref{L1}, one needs to calculate the corresponding $\Psi_i$. Such a calculation involves steps similar to those taken in Theorems~\ref{L2} and \ref{L3}.
For each value of $i$ in the above range, based on Lemma~\ref{L5}, all CWWCs consist of a single cycle and some closed cycle-free walks from the nodes of that cycle. The CWWCs should then be 
first partitioned based on the length of the cycle ($g, g+2, \ldots, i-2$), and then for each cycle length $g+2k, k=0, \ldots, \frac{i-g}{2}-1$, they should be further partitioned into different 
categories based on the possible lengths of the closed cycle-free walks (the number of categories is equal to $p(\frac{i-g-k}{2})$). Within each category, subcategories then need to be identified based on different ways that the closed cycle-free walks can be attached to the cycle. 

It is easy to see that calculation of each $\Psi_i$ requires the information of all $N_j, g \leq j \leq i-2$. One can also see that the calculations required for finding the multiplicities of $i$-CWWCs that consist of a $j$-cycle are
similar to those required for finding the multiplicities of $(i-2)$-CWWCs that consist of a $(j-2)$-cycle. This can be seen, for example, by comparing the result of Theorem~\ref{L2} and 
the first term in the right hand side of the equation in Theorem~\ref{L3}. Similarly, the multiplicity of $(g+6)$-CWWCs that consist of a $(g+4)$-cycle is given by
$$
N_{g+4}\times [\frac{g+4}{2}(d_v+d_c)-(g+4)] \times 2(g+6)\:.
$$
Also, based on the calculations of Theorem~\ref{L3}, the number of $(g+6)$-CWWCs that consist of a $(g+2)$-cycle is
\begin{align*}
       N_{g+2}&\Big[ [\frac{g+2}{2}(d_v-2)(d_c-1)+\frac{g+2}{2}(d_c-2)(d_v-1)]\\
                      &+ \Big( [{{\frac{g+2}{2}} \choose 2}+\frac{g+2}{2}](d_v-2)^2+ [{{\frac{g+2}{2}} \choose 2}+\frac{g+2}{2}](d_c-2)^2 + (\frac{g+2}{2})^2(d_v-2)(d_c-2) \Big)\\
                      &+ \Big({g+2\choose 2}+2(g+2) + (g+4)\times (\frac{g+2}{2}(d_v-2)+\frac{g+2}{2}(d_c-2))\Big)\Big] \times 2(g+6)  .
\end{align*}

\section{Computing the number of $4$-cycles ($6$-cycles) in irregular (half-regular) bipartite graphs}
\label{sec35}

\subsection{Calculation of $N_{4}$ for irregular graphs with $g \geq 4$}
\label{subsec44}

\begin{theo}\label{Th001}
In an irregular bipartite graph $G$ with the node set $V(G)$, we have
\begin{equation}\label{EEE4}
N_4=\frac{\sum_{j=1}^{|V(G)|} \lambda_j^4 - \sum_{v \in V(G)} d(v) \big(2d(v)-1\big)}{8}\:,
\end{equation}
where $N_4$ and  $\{\lambda_j\}$ are the number of $4$-cycles and the spectrum of $G$, respectively.
\end{theo}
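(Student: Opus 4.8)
The plan is to start from the spectral identity $\sum_{j=1}^{|V(G)|}\lambda_j^4=\mathrm{tr}(A^4)$, which by the properties of the adjacency matrix equals the total number of closed walks of length four in $G$ (with walks that differ in direction or starting edge counted separately, as is the convention throughout the paper). By Lemma~\ref{V2L1}, every such closed walk is either a $4$-cycle, a closed cycle-free walk of length four, or a closed walk with cycle (CWWC) of length four. Writing $\Omega_4$ and $\Psi_4$ for the numbers of the latter two types, and noting that each $4$-cycle is traversed as $2\cdot 4=8$ distinct closed walks (four starting edges times two directions), we get $\sum_{j}\lambda_j^4 = 8N_4+\Omega_4+\Psi_4$. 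The whole problem thus reduces to showing $\Psi_4=0$ and expressing $\Omega_4$ through the degree sequence.

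The term $\Psi_4$ vanishes: a CWWC of length four must contain a cycle in its edge-induced subgraph without being a $4$-cycle itself, but since $g\geq 4$ the shortest cycle has length at least four, and a closed walk of length four whose edge set contains a cycle of length $\geq 4$ can only traverse the edges of that very $4$-cycle. Hence it would itself be a $4$-cycle, a contradiction, so $\Psi_4=0$ and $\sum_j\lambda_j^4 = 8N_4+\Omega_4$.

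To evaluate $\Omega_4$ I would enumerate closed cycle-free walks $v_1v_2v_3v_4v_1$ according to whether $v_3=v_1$. Let $N(v)$ denote the set of neighbors of a node $v$. If $v_3=v_1$, the walk is $v_1v_2v_1v_4v_1$ with $v_2,v_4\in N(v_1)$ arbitrary, contributing $d(v_1)^2$ walks rooted at $v_1$. If $v_3\neq v_1$, cycle-freeness forces $v_4=v_2$ (otherwise $v_1,v_2,v_3,v_4$ are pairwise distinct and form a $4$-cycle, which is not cycle-free), so the walk is $v_1v_2v_3v_2v_1$ with $v_2\in N(v_1)$ and $v_3\in N(v_2)\setminus\{v_1\}$, contributing $\sum_{v_2\in N(v_1)}\big(d(v_2)-1\big)$ walks rooted at $v_1$. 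Summing over all roots gives $\Omega_4 = \sum_v d(v)^2 + \sum_{v_1}\sum_{v_2\in N(v_1)}\big(d(v_2)-1\big)$.

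The main obstacle is the second term, which at a single root depends on the degrees of the \emph{neighbors} of $v_1$ and so is not visibly a function of the degree sequence alone. The resolution is a double-counting rearrangement: interchanging the order of summation yields $\sum_{v_1}\sum_{v_2\in N(v_1)}d(v_2)=\sum_{v_2}d(v_2)\cdot d(v_2)=\sum_v d(v)^2$ and $\sum_{v_1}\sum_{v_2\in N(v_1)}1=\sum_v d(v)$, so the neighbor-degree dependence collapses and the second term equals $\sum_v d(v)^2-\sum_v d(v)$. Therefore $\Omega_4 = 2\sum_v d(v)^2-\sum_v d(v)=\sum_{v\in V(G)}d(v)\big(2d(v)-1\big)$, which depends only on the degrees. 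Substituting $\Psi_4=0$ and this expression into $\sum_j\lambda_j^4=8N_4+\Omega_4$ and solving for $N_4$ gives (\ref{EEE4}).
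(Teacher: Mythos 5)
Your proof is correct and follows essentially the same route as the paper: both subtract the count of closed cycle-free walks of length four from $\mathrm{tr}(A^4)$ and divide by $8$, with your two cases ($v_3=v_1$ versus $v_3\neq v_1$) matching the paper's three walk types (your first case merges its Types 1 and 2) and the same interchange-of-summation step collapsing the neighbor-degree sum to $\sum_v d(v)^2-\sum_v d(v)$. The only difference is cosmetic: you justify $\Psi_4=0$ explicitly, which the paper leaves implicit in its two-category partition, and you sum over all nodes at once rather than side by side.
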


\begin{proof}
Let $V(G)=U \cup W$, where $U=\{u_1, u_2, \ldots , u_n\}$, $W=\{w_1,w_2, \ldots, w_{m}\}$, and in which, the degree of node $u_i$ is $d_i$ and the degree of node $w_i$ is $d_i'$.
In $G$, the set of closed walks of length $4$ can be partitioned into two categories: (1) $4$-cycles (2) closed cycle-free walks of length $4$. Let $\Omega_{i}(G), i \geq 2$, denote the number of closed cycle-free walks of length $i$ in $G$, 
and $S_{u,G,i}$ ($S_{w,G,i}$) be the number of closed cycle-free walks of length $i$ from the variable node $u$ (the check node $w$) to itself in $G$. 
We have
\begin{equation}
\Omega_{i}(G)=\sum_{u\in U}S_{u,G,i} + \sum_{w\in W}S_{w,G,i}\:,
\label{eq21}
\end{equation}
and
\begin{equation}
N_4 = \frac{\sum_{j=1}^{|V(G)|} \lambda_j^4 - \Omega_{4}(G)}{8}\:.
\label{eq22}
\end{equation}
We can have three different types of closed cycle-free walks of length $4$. See Fig. \ref{graphGG1}.
The number of closed cycle-free walks  of length $4$ of Type 1 from $u_j$ to itself  is $d_{j}(d_j-1) $. That number for Type 2 is $d_j$, and for Type 3 is $ \sum_{w_k \in N(u_j)}(d_k'-1)$,
where $N(u_j)$ is the set of neighbors of $u_j$. Thus, 
\begin{equation}\label{EEE1}
S_{u_j,G,4}=d_{j}^2 + \sum_{w_k \in N(u_j)}( d_k'-1) \:.
\end{equation}
Similarly, for a check node $w_j$, we have:
\begin{equation}\label{EEE2}
S_{w_j,G,4}=(d_{j}')^2 + \sum_{u_k \in N(w_j)}( d_k-1)\:.
\end{equation}
By (\ref{EEE1}):
\begin{align*}
\sum_{u_j\in U} S_{u_j,G,4}   &= \sum_{u_j\in U} d_{j}^2 + \sum_{u_j\in U}\sum_{w_k \in N(u_j)}( d_k'-1) \\
                              &= \sum_{u_j\in U}  d_{j}^2+ \sum_{w_k \in W}d_k'( d_k'-1). \numberthis \label{EEE9}
\end{align*}
Similar to (\ref{EEE9}), we have:
\begin{equation}\label{EEE10}
\sum_{w_j\in W} S_{w_j,G,4} = \sum_{w_j\in W} ( d_{j}')^2+ \sum_{u_k \in U} d_k( d_k-1)\:.
\end{equation}
By using (\ref{EEE9}) and (\ref{EEE10}) in (\ref{eq21}), we have:
\begin{equation}\label{EEE3}
\Omega_{4}(G)= \sum_{u_j\in U} d_j(2d_j-1)+ \sum_{w_j\in W} d_j'(2d_j'-1)\:.
\end{equation}
Combining (\ref{EEE3}) with (\ref{eq22}) completes the proof.
\end{proof}

\begin{figure}[ht]
\begin{center}
\includegraphics[scale=.4]{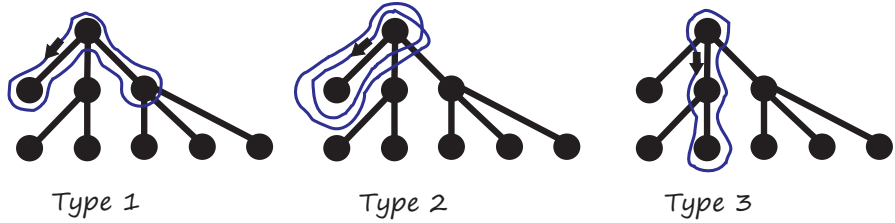}
\caption{The three different types of closed cycle-free walks of length $4$.
} \label{graphGG1}
\end{center}
\end{figure}

It can be seen that the result of Theorem~\ref{Th001} reduces to that of Theorem \ref{L1} for $4$-cycles in bi-regular bipartite graphs.


\begin{ex}
Consider a path $P_5$ of length $4$. The spectrum of $P_5$ is $\{\sqrt{3},1,0,-1,-\sqrt{3}\}$.
Thus, we have $\sum \lambda_j^4= 20$. For $P_5$, we also have $\sum_{u_j\in U} d_j(2d_j-1)+ \sum_{w_j\in W} d_j'(2d_j'-1)= 20$. 
Thus, by Theorem \ref{Th001}, $N_4 = (20-20)/ 8 =0$, which is clearly the correct answer.
\end{ex}



\subsection{Calculation of $N_{6}$ for half-regular bipartite graphs with $g \geq 6$}

The positive result of Subsection~\ref{subsec44} is applicable to half-regular graphs and can be used to compute $N_4$.  In this subsection, we compute $N_6$ for half-regular bipartite graphs with $g \geq 6$, in terms of graph's spectrum and degree sequences. In the following, without loss of generality, we assume that the graphs are regular on the variable side, i.e., they are variable-regular.

\begin{theo}\label{Th00v2}
Let $G$ be a variable-regular bipartite graph with girth at least six and node set $V(G)=U \cup W$, where $U=\{u_1, u_2, \ldots , u_n\}$, $W=\{w_1,w_2, \ldots, w_{m}\}$, and
in which, the degree of every node $u_i \in U$ is $d_v$, and the degree of node $w_i$ is $d_i'$. We then have
\begin{align*}
12N_6  &= \sum_{j=1}^{|V(G)|} \lambda_j^6 - n \times d_v\Big(1+3(d_v-1)+2(d_v-1)(d_v-2) \Big)\\
       &- \sum_{w_j\in W}  \Big( d_j'(3d_j'-2)+2d_j'(d_j'-1)(d_j'-2)+6d_j'(d_j'-1)(d_v-1)\Big)\\
       &- \sum_{w_j\in W}  \Big(  3d_j'(d_j'-1+d_v-1) \Big)\:,
\end{align*}
where $N_6$ and $\{\lambda_j\}$ are the number of $6$-cycles and the spectrum of $G$, respectively.
\end{theo}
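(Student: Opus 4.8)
The plan is to mirror the strategy of Theorem~\ref{Th001}, but now for closed walks of length six. The starting point is that, since $g\geq 6$, the shortest possible closed walk with cycle has length $g+2\geq 8$ (by the basis step of Lemma~\ref{L5}); hence there are no CWWCs of length six, and by Lemma~\ref{V2L1} every closed walk of length six is either a $6$-cycle or a closed cycle-free walk. Recalling that $\sum_j\lambda_j^6=\mathrm{tr}(A^6)$ counts all closed walks of length six, and that each $6$-cycle is counted $2\times 6=12$ times (six starting edges, two directions), this yields the identity $12N_6=\sum_j\lambda_j^6-\Omega_6(G)$, where $\Omega_6(G)$ is the number of closed cycle-free walks of length six. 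The whole task therefore reduces to computing $\Omega_6(G)$ and showing it equals the three subtracted sums in the statement.

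To compute $\Omega_6(G)$, I would write $\Omega_6(G)=\sum_{u\in U}S_{u,G,6}+\sum_{w\in W}S_{w,G,6}$ exactly as in (\ref{eq21}), where $S_{\cdot,G,6}$ counts closed cycle-free walks of length six based at a given node. The key point is that such a walk has a tree as its edge-induced subgraph and reaches only nodes within distance three of its base. I would enumerate, once and for all, the finitely many shapes it can take: a single edge traversed three times; a ``cherry'' (path of length two) centred at the root; a depth-one star with three leaves; a path of length two rooted at an end; a path of length three; and the two remaining four-node trees (a child with two grandchildren, and two children one of which has a child). For each shape I would record the number of ordered walks realizing it (for instance $3$ for the end-rooted path of length two, $6$ for the star, $2$ for the two mixed trees) and the number of ways to embed that shape, expressed through the degree of the base node and the degrees of the nodes it meets. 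Because $G$ is variable-regular, a walk based at a variable node meets only $d_v$ and the degrees of its check-neighbours, while a walk based at a check node $w_j$ of degree $d_j'$ meets $d_j'$, $d_v$, and the degrees of the checks at distance two from $w_j$. A useful cross-check is that substituting a common check degree $d_c$ into the variable-based count reproduces exactly the entry $S_{d_v,d_c,6}$ of Table~\ref{T1}, which validates the enumeration.

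The decisive step is the summation and regrouping. Summing the variable-based counts over $U$ and using $\sum_{u\in U}\sum_{c\sim u}f(d(c))=\sum_{w_j\in W}d_j'\,f(d_j')$ converts every neighbour-sum into a sum over check degrees. The one delicate term is the check-based path of length three, $w-u-w'-u''$, whose multiplicity carries the degree of the distance-two check $w'$; summing it over $W$ produces the triple sum $\sum_{w}\sum_{u\sim w}\sum_{w'\sim u,\,w'\neq w}(d(w')-1)$, which I would re-index as a sum over ordered pairs of distinct checks sharing a common variable neighbour. This re-indexing is valid precisely because $g\geq 6$ forbids $4$-cycles, so any two checks share at most one common variable neighbour; the sum then collapses to $(d_v-1)\sum_{w_j}d_j'(d_j'-1)$. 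Collecting the purely $d_v$-dependent contributions---which, notably, include the check-based ``child with two grandchildren'' term $\sum_{w_j}d_j'(d_v-1)(d_v-2)=nd_v(d_v-1)(d_v-2)$, using $\sum_{w_j}d_j'=|E|=nd_v$---yields exactly the first subtracted sum $n d_v\big(1+3(d_v-1)+2(d_v-1)(d_v-2)\big)$, while grouping the remaining terms by the monomials $d_j'$, $d_j'(d_j'-1)$ and $d_j'(d_j'-1)(d_j'-2)$ reproduces the two sums over $W$. The main obstacle is thus not any single hard inequality but the bookkeeping: enumerating all walk shapes without omission or double counting, correctly invoking the girth-six hypothesis to guarantee that each degree-product genuinely counts tree-shaped (hence cycle-free) walks, and carrying out the re-indexing of the distance-two term.
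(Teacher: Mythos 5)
Your proposal is correct and takes essentially the same approach as the paper's proof: the reduction $12N_6=\sum_j\lambda_j^6-\Omega_6(G)$ (valid since $g\geq 6$ rules out length-$6$ CWWCs), followed by a per-node enumeration of closed cycle-free $6$-walks and conversion of the neighbour sums into sums over check degrees, using $\sum_j d_j'=nd_v$. Your seven rooted tree shapes with walk-pattern multiplicities are exactly the paper's twelve walk types regrouped (e.g., your end-rooted path of length two with factor $3$ corresponds to the paper's Types 9--11), and your re-indexing of the check-based path-of-length-three term reproduces the paper's simplification of its Type 12 sum to $(d_v-1)\sum_{w_j}d_j'(d_j'-1)$.
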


\begin{proof}
For the graph $G$, we partition the set of closed walks of length $6$ into two categories: (1) $6$-cycles (2) Closed cycle-free walks of length $6$. 
Let $\Omega_{6}(G)$ denote the number of closed cycle-free walks of length $6$ in the graph $G$, and let $S_{u,G,6}$ ($S_{w,G,6}$) be the number of closed cycle-free walks of length $6$ from the variable node
$u$ (check node $w$) to itself. We thus have: $\Omega_{6}(G)=\sum_{u\in U}S_{u,G,6} + \sum_{w\in W}S_{w,G,6}$.
We have twelve different types of closed cycle-free walks of length $6$. See Fig. \ref{graphGGFF}.

\begin{figure}[ht]
\begin{center}
\includegraphics[scale=.4]{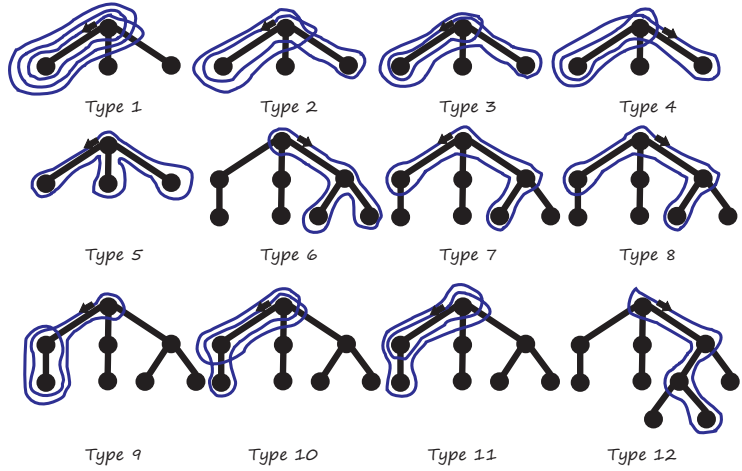}
\caption{The twelve different types of closed cycle-free walks of length $6$.
} \label{graphGGFF}
\end{center}
\end{figure}

Next, we calculate the number of closed cycle-free walks of length $6$ for each type:\\
{\bf Type 1} The number of closed walks of Type 1 from any variable node $u_j$ to itself is $d_v$. This number for the check node $w_j$ is $d_j'$. Thus, the total number is
\begin{equation}\label{ES1}
n \times d_v+\sum_{w_j\in W} d_j'\:.
\end{equation}
{\bf Types 2, 3, 4} The total number of closed walks of Types 2, 3, and 4 from any variable node $u_j$ to itself is $3d_v(d_v-1)$. Similarly, for the check node $w_j$, this number is  $3d_j'(d_j'-1) $.
The total number is thus
\begin{equation}\label{ES2}
n \times 3d_v(d_v-1)+\sum_{w_j\in W} 3d_j'(d_j'-1)\:.
\end{equation}
{\bf Type 5} The number of closed walks of Type 5 from any variable node $u_j$ to itself  is $d_v(d_v-1)(d_v-2) $.
Similarly, for the check node $w_j$, the number  is $d_j'(d_j'-1)(d_j'-2) $. So, the total number is
\begin{equation}\label{ES3}
n \times d_v(d_v-1)(d_v-2)+\sum_{w_j\in W} d_j'(d_j'-1)(d_j'-2)\:.
\end{equation}
{\bf Type 6} The number of closed walks  of Type 6 from the variable node $u_j$ to itself  is
$\sum_{w_k \in N(u_j)} ( d_k'-1)( d_k'-2).$ Similarly, for the check node $w_j$, the number is
$\sum_{u_k \in N(w_j)} ( d_v-1)( d_v-2),$ and thus the total number is
\begin{equation}\label{ES4}
n \times d_v(d_v-1)(d_v-2)+\sum_{w_j\in W} d_j'(d_j'-1)(d_j'-2)\:.
\end{equation}
{\bf Types 7, 8} The total number of closed walks of Types 7 and 8 from the variable node $u_j$ to itself  is
$2(d_v-1)\sum_{w_k \in N(u_j)} ( d_k'-1).$ Similarly, for the check node $w_j$, the number is $2d_j'(d_j'-1)(d_v-1)$.
Thus, the total number is
\begin{equation}\label{ES5}
4(d_v-1)\sum_{w_j\in W}  d_j'(d_j'-1)\:.
\end{equation}
{\bf Types 9, 10, 11} The total number of closed walks of Types 9, 10 and 11 from the variable node $u_j$ to itself  is
$3\sum_{w_k \in N(u_j)} ( d_k'-1).$ Similarly, for the check node $w_j$, the number  is $3d_j' (d_v-1)$.
Thus the total number is
\begin{equation}\label{ES6}
\sum_{w_j\in W}  \Big( 3d_j'(d_j'-1)+ 3d_j'(d_v-1) \Big)\:.
\end{equation}
{\bf Type 12} The number of closed walks  of Types 12 from the variable node $u_j$ to itself  is
$\sum_{w_k \in N(u_j)} \sum_{u_\ell \in N(w_k)}( d_v-1)$.
Similarly, for the check node $w_j$, the number is $\sum_{u_k \in N(w_j)} \sum_{w_\ell \in N(u_k)}( d_{\ell}'-1) $.
Thus the total number is
\begin{equation} \label{W1E2}
\sum_{u_j\in U} \sum_{w_k \in N(u_j)} \sum_{u_\ell \in N(w_k)}(d_v-1) + \sum_{w_j\in W}\sum_{u_k \in N(w_j)} \sum_{w_\ell \in N(u_k)}( d_{\ell}'-1)\:.
\end{equation}
The two terms in (\ref{W1E2}) can be simplified as follows:
$$ \sum_{u_j\in U} \sum_{w_k \in N(u_j)} \sum_{u_\ell \in N(w_k)}( d_v-1) = \sum_{w_j\in W}    d_j'(d_j'-1)(d_v-1)\:,$$
and
$$  \sum_{w_j\in W}\sum_{u_k \in N(w_j)} \sum_{w_\ell \in N(u_k)}( d_{\ell}'-1) = \sum_{w_j\in W}    d_j'(d_v-1)(d_j'-1)\:.$$
Consequently, the total number in (\ref{W1E2}) can be written as
\begin{equation}\label{ES7}
2 (d_v-1) \sum_{w_j\in W}   d_j'(d_j'-1)\:.
\end{equation}

By adding up (\ref{ES1}), (\ref{ES2}), (\ref{ES3}), (\ref{ES4}), (\ref{ES5}), (\ref{ES6}) and (\ref{ES7}), we have
\begin{align*}
\Omega_{6}(G) &= n\times d_v\Big(1+3(d_v-1)+2(d_v-1)(d_v-2) \Big)\\
              &+ \sum_{w_j\in W}  \Big(d_j'(3d_j'-2)+2d_j'(d_j'-1)(d_j'-2)+6d_j'(d_j'-1)(d_v-1)\Big)\\
              &+ \sum_{w_j\in W}  \Big(   3d_j'(d_j'-1+d_v-1) \Big)
\end{align*}
This, together with $N_6=(\sum_j \lambda_j^6 - \Omega_{6}(G))/12$, complete the proof.
\end{proof}

One can see that the result of Theorem~\ref{Th00v2} reduces to that of Theorem \ref{L1}, for the special case of bi-regular bipartite graphs with $g \geq 6$.

\begin{figure}[ht]
\begin{center}
\includegraphics[scale=.3]{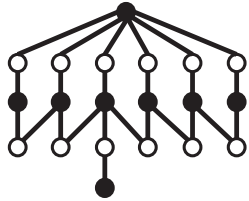}
\caption{The variable-regular graph $G$ of Example~\ref{ex-ty}.
} \label{graphGG9}
\end{center}
\end{figure}

\begin{ex}
\label{ex-ty}
Consider the variable-regular bipartite graph $G$ shown in Fig. \ref{graphGG9}. In $G$, we have $d_v=2$, and the degree sequence of check nodes is $(6,4,3,3,3,2,2,1)$. 
Also, $n=12$ and $m=8$. By inspection, it is clear that $G$ has five $6$-cycles. Now, we compute the number of $6$-cycles by Theorem \ref{Th00v2}. 
We have $tr({A(G)^6})=1344$, and $\Omega_{6}(G)=1284$. Thus,  $N_6 = (1344-1284)/ 12 = 5$.
\end{ex}

\section{Numerical results}
\label{sec4}

In this section, we compute the multiplicity of short cycles in the Tanner graphs of two well-known LDPC codes using the closed form formulas that we derived in previous sections.
We then compare the results with those obtained by the backtracking algorithm of \cite{Jeff} to verify that they match.

\subsection{Tanner $(155,64)$ code}
As the first example, we consider the Tanner $(155,64)$ code~\cite{mackay2005encyclopedia}.
This code is a $(3, 5)$-regular LDPC code with $n=155$, $m=93$ and $g=8$.
To compute $N_8$, we calculate $tr ({A(G)^{8}})=475230$. From Table \ref{T1}, we obtain $S_{3,5,8}=1509$ and $S_{5,3,8}=2515$. Thus, by (\ref{E2}), $\Omega_{8}(3,5, G)=  467790$, and using Theorem \ref{L1},  we 
compute $N_8=(475230-467790)/16= 465$. 
For $N_{10}$, we have  $tr ({A(G)^{10}})= 4636050$, and from Table \ref{T1}, we obtain $S_{3,5,10}=13995$ and $S_{5,3,10}=23325$.
Thus, by (\ref{E2}), we have $\Omega_{10}(3,5, G)= 4338450$.
Also, $N_g\times [\frac{g}{2}(d_v+d_c)-g]\times 2(g+2)= 223200$. Consequently,  by Theorem \ref{L1}, we have $N_{10}=(4636050-4338450-223200)/(2 \times 10)= 3720$. 

Finally, we have $tr ({A(G)^{12}})= 49222110$. From Table \ref{T1}, $S_{3,5,12}= 134277$ and $S_{5,3,12}=223795$. Thus, by (\ref{E2}), we have $\Omega_{12}(3,5, G) = 41625870$.
Also, by Theorem \ref{L3}, $\Psi_{12}(3,5,G)=  7053120$. Consequently, by Theorem \ref{L1}, we have $N_{12}=(49222110-41625870-7053120)/24= 22630$.

\subsection{Margulis $(2640, 1320)$ code}

As the second example, we consider Margulis $(2640, 1320)$ code with $g = 8$~\cite{mackay2005encyclopedia}.  
This code is a $(3, 6)$-regular LDPC code with $n=2640$ and $m=1320$. 
For Tanner graph $G$ of this code, we have $tr ({A(G)^{8}})=11774400$. 
From Table \ref{T1}, $S_{3,6,8}= 2226$ and $S_{6,3,8}= 4452$. Thus, by (\ref{E2}), $\Omega_{8}(3,6, G)=  11753280$. 
By Theorem \ref{L1}, we then have $N_8=(11774400-11753280)/16= 1320$. 
To compute $N_{10}$, we first obtain $tr ({A(G)^{10}})= 124924800$. From Table \ref{T1}, $S_{3,6,10}= 23478$ and $S_{6,3,10}= 46956$.
Thus, by (\ref{E2}), $\Omega_{10}(3,6, G)=  123963840$. Also, $N_g\times [\frac{g}{2}(d_v+d_c)-g]\times 2(g+2)=  739200$.
By Theorem \ref{L1}, we thus have $N_{10}=(124924800-123963840-739200)/(2 \times 10)= 11088$. 

Finally for computing $N_{12}$, $tr ({A(G)^{12}})= 1382325120$, and from Table \ref{T1}, $S_{3,6,12} =256374$, $S_{6,3,12} =512748$. 
Thus, by (\ref{E2}), we have $\Omega_{12}(3,6, G)= 1353654720$. Also, by Theorem \ref{L3}, $\Psi_{g+4}(d_v,d_c,G)= 26104320$.
Consequently, by Theorem \ref{L1}, we compute $N_{12}=(1382325120-1353654720-26104320)/24=106920$.

All the above results for $N_8$, $N_{10}$ and $N_{12}$ match those from \cite{Jeff}.

\section{Concluding remarks}
\label{sec5}

It has been long known that the number of closed walks in a graph can be computed using the spectrum of the graph. Very recently, Blake and Lin~\cite{blake2017short} computed the number of shortest cycles in a bi-regular bipartite graph in terms of the spectrum of the graph and the extra information of node degrees and multiplicities on the two sides of the bipartition. In this work, we extended the results of~\cite{blake2017short} to compute the multiplicity of $i$-cycles for $g+2 \leq i \leq 2g-2$, in bi-regular bipartite graphs, as a function of the spectrum and the node degrees. 
Moreover, for irregular  (half-regular) bipartite graphs with $g \geq 4$  ($g \geq 6$), we derived closed form equations for the multiplicity of $4$-cycles ($6$-cycles) in terms of the spectrum and the degree sequences. 
 
 In the context of coding, the degree sequences of Tanner graphs play an important role in the performance of the corresponding LDPC codes, particularly in the waterfall region. Given a degree distribution, however,
 it is well-known that the performance of practical finite-length codes can have a large variation in the error-floor region. The main cause of such a large variation is the difference in the trapping set distribution of different codes (all with the same degree distribution). Trapping sets, on the other hand, are closely related to the distribution of short cycles in the graph. The results of this work show that, for a given degree distribution, it is in fact the spectrum of the Tanner graph that is responsible for the variations in cycle multiplicities. In this context, it would be interesting to study the relationship between the spectrum of the graph and its trapping set distribution.

\bibliographystyle{ieeetr}

\end{document}